\documentclass[letterpaper,twocolumn,10pt]{article}
\usepackage{usenix-2020-09}

\usepackage{tikz}
\usepackage{amsmath}

\newcommand{\eat}[1]{}
\newcommand{\del}[1]{}

\usepackage{textcomp}
\usepackage{setspace}
\usepackage[title]{appendix}
\usepackage{ifthen}
\usetikzlibrary{positioning}
\usepackage{listings}
\usepackage{epsfig,endnotes}
\usepackage{xspace}
\usepackage[linesnumbered,ruled,vlined]{algorithm2e}
\usepackage{float}
\usepackage{graphicx}
\usepackage[labelformat=simple]{subcaption}
\usepackage{kbordermatrix}
\renewcommand{\kbldelim}{(}
\renewcommand{\kbrdelim}{)}
\usepackage{amsthm} 
\usepackage{amssymb}
\usepackage{pifont}
\usepackage{enumitem}
\usepackage{xspace}
\usepackage{caption}
\usepackage{booktabs}  
	
\usepackage{tabularx}
\usepackage{array}
\usepackage{arydshln}
\usepackage{multirow}
\usepackage{ragged2e}  
\usepackage{outlines}
\usepackage[subtle]{savetrees}
\usepackage{diagbox}

\newcolumntype{C}{>{\centering\arraybackslash}X}
\newcolumntype{R}{>{\RaggedRight\arraybackslash}X}
\newcolumntype{L}{>{\RaggedLeft\arraybackslash}X}

\usepackage{titlesec}
\titlespacing{\subsection}{0pt}{*0.8}{*0.8} 

\usepackage{pifont}

\usepackage[capitalize,noabbrev]{cleveref}
\crefname{equation}{Eq.}{Eqs.}
\Crefname{equation}{Eq.}{Eqs.}
\crefname{theorem}{Thm.}{Thms.}
\Crefname{theorem}{Thm.}{Thms.}
\crefname{figure}{Fig.}{Figs.}
\Crefname{figure}{Fig.}{Figs.}
\crefname{table}{Table}{Tables.}
\Crefname{table}{Table}{Tables.}

\crefformat{section}{\S#2#1#3}
\crefformat{subsection}{\S#2#1#3}

\Crefformat{section}{\S\S#2#1#3} 
\Crefformat{subsection}{\S\S#2#1#3} 

\newcommand{\T}[1]{\par\vspace{2pt plus 1pt minus 1pt}\noindent\textbf{#1}} 
    \newcommand{\Ts}[1]{\par\noindent\textit{#1}} 

\newcommand{\thm}[1]{Thm.~\ref{#1}} 

\newcommand{\fig}[1]{Fig.~\ref{#1}} 

\newcommand{\be}{\begin{equation}}
\newcommand{\ee}{\end{equation}}

\newcommand{\sysname}{CrossCheck\xspace}
\newcommand{\sys}{\sysname}
\newcommand{\name}{\sysname}
\newcommand{\geant}{G\'EANT\xspace}

\newcommand{\sr}[1]{\footnote{\color{cyan} SR: #1}}
\newcommand{\ri}[1]{\footnote{\color{blue} RI: #1}}
\newcommand{\ak}[1]{\footnote{\color{magenta} AK: #1}}
\newcommand{\ik}[1]{\footnote{\color{violet} IK: #1}}
\newcommand{\as}[1]{\footnote{\color{orange} AS: #1}}
\newcommand{\rjs}[1]{\footnote{\color{purple} RJS: #1}}
\newcommand{\IK}[1]{\ik{#1}}
\newcommand{\red}[1]{\noindent{\color{red} {\bf #1}}}
\newcommand{\topic}[1]{\textcolor{orange}{\bf #1}}

\newcommand{\cameraready}[1]{\noindent{\color{red} #1}}
\renewcommand{\cameraready}[1]{#1}

\renewcommand{\sr}[1]{}
\renewcommand{\ri}[1]{}
\renewcommand{\ak}[1]{}
\renewcommand{\as}[1]{}
\renewcommand{\ik}[1]{}
\renewcommand{\rjs}[1]{}
\renewcommand{\IK}[1]{}
\renewcommand{\red}[1]{}
\renewcommand{\topic}[1]{#1}

\usepackage{bm}
\usepackage[d]{esvect}

\newcommand{\NN}{\mathcal{N}}

\newcommand{\cutoff}{\Gamma}

\newcommand{\eg}{{\it e.g.,}\xspace}
\newcommand{\ie}{{\it i.e.,}\xspace}
\newcommand{\etc}{{\it etc.}\xspace}
\newcommand{\vs}{{\it vs.}\xspace}

\newcommand{\para}[1]{\left( #1 \right)}        

\newcommand{\sbrac}[1]{\left[ #1 \right]}
 
\newcommand{\bp}{\begin{proof}}
\newcommand{\bpo}{ \begin{proof}[Proof Outline] }
\newcommand{\ep}{\end{proof}}

\newtheorem{theorem}{Theorem}
 
\usepackage{pifont}

\usepackage{booktabs}  
\usepackage{multirow}

\begin{document}

\date{}

\title{CrossCheck: Input Validation for WAN Control Systems}

\newcommand{\aut}[2]{#1\texorpdfstring{$^{#2}$}{(#2)}} 
\author{
{\rm \aut{Alexander Krentsel}{1,2} \quad  \aut{Rishabh Iyer}{1} \quad \aut{Isaac Keslassy}{1,3} \quad \aut{Bharath Modhipalli}{2} 
}\\
{\rm \aut{Sylvia Ratnasamy}{1,2} \quad \aut{Anees Shaikh}{2} \quad \aut{Rob Shakir}{2}
}\\ \\
$^1$ \textit{UC Berkeley} \quad $^2$ \textit{Google} \quad $^3$ \textit{Technion}
}

\maketitle
\pagestyle{empty}

\thispagestyle{empty}

\begin{abstract}
We present CrossCheck, a system that validates inputs to the Software-Defined Networking (SDN) controller in a Wide Area Network (WAN). By detecting incorrect inputs -- often stemming from bugs in the SDN control \linebreak infrastructure -- CrossCheck alerts operators before they trigger network outages.  

Our analysis at a large-scale WAN operator identifies invalid inputs as a leading cause of major outages, and we show how CrossCheck would have prevented those incidents. We deployed CrossCheck as a shadow validation system for four weeks in a production WAN, during which it accurately detected the single incident of invalid inputs that occurred while sustaining a 0\% false positive rate under normal operation, hence imposing little additional burden on operators. In addition, we show through simulation that CrossCheck reliably detects a wide range of invalid inputs (e.g., detecting demand perturbations as small as 5\% with 100\% accuracy) and maintains a near-zero false positive rate for realistic levels of noisy, missing, or buggy telemetry data (e.g., sustaining zero false positives with up to 30\% of corrupted telemetry data).  
\end{abstract}

\section{Introduction}
\label{sec:introduction}

\topic{Modern society is increasingly reliant on the Internet, yet state-of-the-art networks continue to exhibit regular outages.}
Major outages have been reported by virtually every large network operator~\cite{aws-outage, azure-outage, gcp-outage, ibm-outage, meta-outage},
and a recent study shows that, despite sustained effort and investment, the frequency of such outages is not declining~\cite{dsdn}. 

\topic{To better understand these outages, we studied the postmortem reports for all major outages in a large cloud WAN over a five-year period (2019–2024).}
Like many WANs operated by large cloud providers~\cite{swan,ebb,b4}, this network employs an SDN-based control architecture in which a logically centralized software controller is responsible for routing and traffic engineering decisions.
As such, we believe the findings from our study are broadly representative.

\topic{The most common root cause across the outages we analyzed---accounting for over one-third of all cases---was \emph{incorrect inputs} to the SDN controller, i.e., inputs that \emph{did not accurately reflect the current state of the network}.}
For example, in some outages, the SDN controller received an incomplete view of the current traffic demand, while in others it received an incorrect view of the current network topology. 
As expected, decisions based on incorrect inputs resulted in undesirable outcomes, including sub-optimal routes, congestion, link overloads, and packet loss.
Informal conversations with practitioners at other cloud providers confirm that such incorrect inputs are not specific to the network we studied, but are also a common cause of outages elsewhere.

\topic{Why do incorrect inputs occur? The answer lies in the complexity of production WANs.}
State-of-the-art deployments incorporate control infrastructures spanning dozens of services and millions of lines of code, each subject to frequent updates.
In addition, they rely on routers that are typically sourced from multiple vendors and comprise complex hardware and software stacks.
Together, this complexity creates a large surface area for bugs, making it nearly inevitable that some controller inputs will be missing, stale, or incorrect.

\topic{Given that incorrect inputs are inevitable,  we ask:} 
\emph{Can we build a system that detects when inputs to the SDN controller deviate from the current network state?}
Here, inputs refer to higher-level, aggregated information such as traffic demand matrices or topology views, while current state is defined in terms of low-level dataplane signals exposed by routers, such as interface byte counters, link status indicators, and forwarding entries.
Such a system would allow operators to detect input errors before their effects manifest, thereby avoiding a large class of outages.

\topic{To be useful, such an input validation system must satisfy two key requirements.}
First, it must continuously validate inputs in real time, at the timescales at which SDN control decisions are made. 
This requirement is motivated by our analysis of postmortem reports, which revealed that incorrect inputs typically arise not because they are \cameraready{syntactically} invalid \cameraready{or impossible}, but because they are inconsistent with the \emph{current} state of the network. 
Second, the system must maintain a near-zero False Positive Rate (FPR), \ie not raise an alert when the inputs are correct,
while achieving a high True Positive Rate (TPR), \ie reliably raise an alert when the inputs are incorrect.  
We prioritize minimizing the FPR because, in a typical WAN, inputs are correct for the vast majority of time. 
Thus, even a modest FPR (e.g., 1\%) would produce an unacceptable number of spurious alerts, as controller decisions are executed frequently (typically on the order of minutes). 
This would undermine operator trust and limit adoption. 


\topic{Meeting these requirements in a production-scale WAN presents several challenges.}
First, measurements in real-world networks are inherently noisy, and the system must reliably distinguish benign noise from input errors to avoid false positives.
Second, as noted earlier, even the low-level router signals used to represent current state may be incorrect due to 
router bugs.
This introduces a seemingly recursive problem since the system must validate inputs against reference signals that may themselves be incorrect.
Finally, implementing such a validation system in practice is non-trivial.
The system must operate in real time, remain decoupled from the control plane to avoid shared failure modes, and be sufficiently simple to minimize the risk of bugs in the validator itself.


\topic{We present \textit{\sys}, an input validation system that can reliably detect incorrect inputs to an SDN controller even in the presence of noisy or faulty router signals.}
\sys builds on two key insights.
First, flow conservation in a network yields multiple, redundant measurements that should remain \emph{consistent} during normal operation. For example, the \texttt{bytes\_out} counter on one end of a link should match the \texttt{bytes\_in} on the other end, and the traffic demand reported between a WAN ingress router and an egress one should be reflected in the interface counters of all intermediate routers.
Second, any inconsistencies between these measurements reveal both the \emph{presence} and the \emph{nature} of these errors. 
Specifically, faulty or noisy router signals appear as \emph{local} anomalies, such as a few counters reporting inconsistent values; whereas incorrect controller inputs, such as an incorrect demand matrix, produce \emph{global} inconsistencies visible across all routers along the affected paths.
\sys leverages this asymmetry to distinguish incorrect inputs from noisy or faulty router signals, which enables it to achieve a near-zero FPR while preserving a high TPR.


\topic{To reduce the likelihood of bugs in the validation process, \sys adopts an architecture that is decoupled from the general SDN control infrastructure.}
In \sys, inputs to the controller and router signals are continuously streamed into dedicated databases, 
and the validation logic is implemented as a simple, stateless process that meets the timing requirements.
Unlike current SDN control infrastructures, which are optimized for high performance and availability, {\sys}’s architecture is deliberately \emph{lean}, thereby reducing the likelihood of new or correlated bugs.

\topic{We evaluated \sys by running it as a shadow system for 4 weeks in the production WAN we analyzed, and by stress testing it in simulation across multiple topologies and error scenarios.}
In the production \cameraready{shadow} deployment, \sys incurred a $0$\% FPR and correctly detected the only incorrect input that is known to have occurred during the deployment period.
In simulation, \sys reliably detected incorrect inputs, and was able to identify \emph{all} scenarios in which demand estimates were perturbed by $5$\% or more. 
\sys also proved resilient to faulty telemetry, maintaining an FPR of $0$\% even when up to $30$\% of router signals were missing or corrupted.
Finally, and perhaps most importantly, {\sys}’s accuracy improves exponentially with network size, since larger networks provide more interdependent signals, which suggests that it is especially suited for production WANs. \cameraready{Because \sys only relies on fundamental network invariants, we also argue that it is more amenable to generalization.}

 




\eat{ 


\footnote{This paper extends a previous workshop publication~\cite{hotnets}. It provides more findings on major outages at a large WAN, a radically different solution (\name), a centralized streaming database implementation, and new evaluations that compare \name against the workshop algorithm (Hodor).}
Modern society is deeply and increasingly dependent on the communication infrastructure that underlies the Internet. Even critical infrastructures such as water, power, and transportation now rely on the Internet for automation and control. Availability is thus a network operator’s highest priority. Yet, despite considerable effort and investment, state-of-the-art networks continue to exhibit regular outages. 
For example, recent papers report that the frequency of major outages in large cloud provider WANs has remained roughly constant over the years~\cite{dsdn, meta-outage, aws-outage, azure-outage, gcp-outage, ibm-outage} and recent outages in carrier networks have led to widespread disruptions to daily life~\cite{att-1,others}.

What more can we do to avoid such outages? To answer this question, we analyzed the root cause of the \rjs{Is there an inconsistency here? Previouly we say that this is the last 5 years of outages.} 40 largest network outages
in a large cloud provider's SDN WAN over the last five years. 
Our analysis revealed that the single largest contributing root cause, with \emph{over one third} of these major outages, is one 
that has received relatively little attention within the research community: \emph{incorrect inputs to the SDN controller} that determines how traffic is routed over their network. 
In all these outages, the SDN controller itself operates correctly, but is compromised because it receives inputs that do not accurately reflect the current network state. 
For example, in several of the outages we analyzed, the controller received an incomplete view of the current traffic demand, leading to sub-optimal routes, congestion, and packet drops. 
In other outages, the controller received an incorrect view of the topology, which caused it to overload the links it believed to be operational. 

This data is troubling not only due to the large fraction of outages, but also because undetected wrong inputs evade existing techniques.
Techniques such as 
testing~\cite{netcastle}, emulation~\cite{crystalnet, kne}, and formal verification~\cite{netverify.fun, batfish, hsa, netkat, bagpipe}
validate a system's output \emph{assuming correct inputs}. As such, the guarantees they provide are rendered meaningless when the inputs to the system themselves are incorrect. 

Likewise, our discussions with operators reveal that they perform sanity checks on the inputs to their SDN controller. 
Operators craft static checks to prevent either \emph{impossible} input values, 
or \emph{unlikely} inputs that do not fit static heuristics based on historical patterns.

Unfortunately, static checks have three fundamental issues. (1)~Our analysis has revealed that incorrect inputs are most often neither impossible nor unlikely. Thus, these static checks would not catch most of them. Simply, incorrect inputs do not reflect the \emph{current} state of the network. 
(2)~Heuristics about unlikely inputs are also dangerous, since they can result in false positive scenarios where atypical inputs are discarded despite correctly reflecting the state of the network at that time; \eg in a disaster scenario that impacts a large number of routers. 
(3)~Finally, ad-hoc static checks for unlikely inputs are also hard to manage, as their heuristic rules keep evolving.

Why do incorrect inputs occur? 
We address this in greater detail in \S\ref{sec:motivation} but, briefly, the answer lies in the complexity of production WANs. 
For one, the control infrastructure 
that aggregates these inputs 
is complex, and spans dozens of services with millions of lines of code subject to frequent updates, making bugs unavoidable~\cite{blastshield,dsdn,govindan2016evolveordie}. 
Additionally, incorrect inputs may result from faulty router-level telemetry signals that arise due to bugs in router hardware and software, both of which are complex in terms of lines of code and often blackboxes to network operators.


The goal of this paper is to address the problem of input validation in WAN control systems, and design a system that can identify for the WAN operator when inputs are wrong, with few false positives and false negatives. 
%

We start by presenting our analysis of recent major outages. We show that input validation is an important problem that remains unsolved. We also explain why we want to have a particularly low false positive rate (FPR), \ie a low rate of false alarms. Given that inputs are correct most of the time, a high FPR can result in alerting the operator dozens of time per day for no reason, which would be unacceptable. A lower FPR may come at the expense of a higher false negative rate (FNR), \ie a higher probability of missing incorrect inputs and causing wrong decisions for the customer flows. 

\IK{Sylvia: does this start of the paragraph address your comments? Should we add how inputs and/or telemetry can get wrong? What else is missing?}\ak{the flow conservation properties still feel like they come a bit out of nowhere here}
We then focus on two types of data collected by WAN operators: (1)~high-level WAN controller's inputs constructed from both in-network and out-of-network signals, such as the demand matrix, and (2)~low-level network state information collected from the WAN routers, such as link-counter telemetry from which we deduce the link loads, and local path information from which we deduce the global paths. 
We argue that input validation must be based on \emph{dynamic invariants} that rely on inherent flow-conservation properties of the network. 
These invariants ensure that the \emph{current} network state reflects the WAN controller's inputs. For example, if we double all the demand matrix rates, we would expect the link loads within the network to double as well. 
These invariants also provide a self-consistency for the network state values. For example, the ingress rate at a link should always equal its egress rate. 

Given these invariants, there are still many challenges to building an input validation system.
A first challenge is that the equalities in these invariants do not hold exactly due to some noise. In particular, low-level counters are collected at routers, and each router starts with an arbitrary time-shift for its periodic collection windows. \ri{Can it really be arbitrary?}\IK{I think it is. Alex, please confirm?}\ak{arbitrary in the window, yes, the stream times aren't synced with one another, so it's periodic starting at a random time} In addition, packet losses and propagation delays induce additional noise.
To address this challenge, we use real-world WAN data to define a small noise threshold such that invariants will hold within the threshold with high probability.

We want to use the low-level counters to confirm or disprove the correctness of the high-level inputs. 
\sr{We might want to say something about the relationship between the high-level input and low-level info.} 
\IK{1. Should we add the formula Pd=c? Or is the example with doubling the rates sufficient? (see two paragraphs earlier) 2. Is it OK to focus on the demand matrix only as input? It simplifies the story, but then we miss the topology part.}
However, a second challenge is that the low-level counters may also be wrong. If our demand-matrix input is correct, but our erroneous low-level link loads do not match it, we will keep raising the alarm and suffer from a high FPR.
To address this challenge, we leverage the fact that we have many low-level counters that come from independent sources. 
We design a Democratic Trust Propagation (DTP) algorithm for input validation. To obtain the load at each link, \name relies on a \textit{majority vote} among several independent sources. This majority vote helps remove extreme values that may be due to wrong data, and keep the most likely load values. 
\sr{I'm worried the above para is throwing a compressed version of the algorithm at the reader without enough context and it's just going to be confusing - e.g., we're talking about load values but haven't said yet that we are (if we are) focusing only on demand, haven't explained why invariants are relevant (vs. just checking demands against current load), high-level vs. low-level inputs, etc.}
\IK{Sylvia: Please check this second iteration. What remaining concerns should be addressed?}

A third challenge is that input corruption can be correlated. For example, a router bug could zero all the link counters of the router, and a local majority vote could result in the belief that all the local link loads are zero. 

To mitigate local data corruptions, \name introduces two tools. It uses the information from neighbor routers, and it iteratively propagates \textit{global} information as in gossip algorithms. \name only settles locally when it is confident in the information because of consensus among data sources, while it tries to keep polling non-local counters when there is no consensus.

\IK{Alex: Let's also sell the fact that you found an issue with the topology in a production WAN... I'll let you phrase it correctly.}

Building on the above insights, we discuss the space of possible approaches, and implement a prototype for a  centralized off-the-shelf streaming database that could readily scale to WANs with thousands of routers. \sr{fix this}

Our evaluations suggest that \name could have averted the majority of the outages that stem from incorrect inputs in our dataset.

\sr{let's discuss live but I think we need to say something about why our proposal isn't just redoing what existing control infrastructures do. I.e., the systems arch piece. Feels like a significant gap in our reasoning otherwise.}

} 
\section{Background and Motivation}
\label{sec:background}

Over the past two decades, both traditional ISPs and modern hyperscalers have adopted SDN-based control architectures to manage their WANs~\cite{swan,ebb,b4,road-to-sdn}. 
At the core of these SDN systems is a logically centralized controller that runs a traffic engineering (TE) algorithm to compute capacity-aware paths~\cite{swan,b4,b4-after,ebb}. However, implementing TE in a global WAN entails far more than running a software process on a single datacenter server. In practice, it requires a variety of software services deployed across a global footprint of servers~\cite{swan,b4}. 
\cameraready{
These servers run a variety of services for topology discovery, demand estimation, switch programming, configuring policy, software upgrades, and more. Some of these services run on standard data center servers managed by a cluster management system~\cite{borg, twine, mesos, kubernetes} while edge controllers may run on special servers that are co-located with routers and run a specialized orchestration platform~\cite{orion,onos,dsdn}.

In addition, because each of these components — hardware and software alike — is on the critical path for high availability, they are engineered accordingly with  redundancy, replication, consensus, \etc, becoming fairly complex systems in their own right. For example, both edge and central controllers are commonly replicated across distinct hardware and geo-diverse data centers, running Paxos for consistency and failover~\cite{evolve-or-die}. 
Taken together, these systems represent dozens of non-trivial microservices and millions of lines of code. Prior work has highlighted the complexity of this control infrastructure and the inevitability of bugs and errors within the same~\cite{dsdn, blastshield,evolve-or-die,netcastle}. 
}

As noted earlier, we obtained sanitized diagnostic data from a major cloud provider covering all major outages over the last five years. Our analysis shows that more than one-third of these outages stemmed from incorrect inputs to the TE controller. To better understand how such errors occur, we examined individual outages in detail. In what follows, we first review the inputs to a TE controller, and then summarize our findings on how incorrect inputs arise and why standard best practices for input validation often fall short.

\subsection{TE Inputs}
In an SDN WAN, the inputs to the TE controller consist of two key pieces of information: (i)~{\bf traffic demand}, which is a matrix $D$, where $D_{ij}$ denotes the aggregate rate of traffic entering ingress router $i$ and destined for egress router $j$~\cite{rough2013-traffic-matrix-primer}; and (ii)~{\bf topology}, which includes physical connectivity (\ie which links are present) and link capacity (since partial cuts on bundled links can result in reduced but non-zero capacity). 

These inputs are typically computed from a few different data sources. In multiple networks that we are aware of, $D$ is computed from measurements at \textit{end hosts}~\cite{google2015bwe, swan}, while topology information is pieced together using network  information in the \textit{control plane} (\eg topology models~\cite{malt}, routers marked as undergoing maintenance, \etc) as well as link status signals read from individual \textit{routers}. Information from these various sources is then  aggregated as it travels from hosts and routers through the control hierarchy before being passed up to the TE controller. \cameraready{}

\subsection{How Do Incorrect Inputs Occur?}
Our analysis revealed that incorrect inputs can occur in either of the above inputs, and generally arise due to the three reasons we discuss below. 

\T{1) Incorrect inputs from external sources (e.g., end hosts).}
As mentioned, inputs such as demand may be collected from sources \emph{outside} of the network (\ie not from the routers). 
This can lead to scenarios where the SDN controller receives an incorrect view of demand, despite everything in the network working correctly. 
For example, in one scenario in our dataset, a new rollout of the demand instrumentation system introduced a bug that incorrectly aggregated demand at the end hosts. This caused the SDN controller to receive a partial view of the demand, which led to severe congestion and a major outage. 
In another major outage, the demand instrumentation service correctly measured demand, but this traffic was incorrectly throttled at the end hosts, causing the measured demand to differ from the traffic that was allowed onto the network, and the TE  controller to make suboptimal path selection decisions. 

\T{2) Incorrect router signals.}
Telemetry data provided by routers is one source of information that goes into computing the high-level topology input. In addition, we are interested in using router signals for validation, as representing ground truth for the network's current state.
However, router telemetry is itself not a flawless signal. 
Modern routers consist of complex hardware and millions of lines of code, both of which provide a large surface area for bugs.
For example, we observed a scenario in which a bug in the router operating system  caused certain telemetry messages to be duplicated, with one of the two messages reporting (at random) that the number of packets received on the router's interfaces was zero. These messages led the control plane to interpret these interfaces as faulty, incorrectly removing them from the topology input to the TE controller and leading to unnecessary congestion in the network. 
Additional examples included router bugs that led to malformed telemetry responses, changes in telemetry format (\eg from \texttt{string} to \texttt{int}), delayed telemetry reporting, and incorrect QoS marking on telemetry packets, all of which led to incorrect or missing router signals. 

\T{3) Bugs in the control plane infrastructure.} Bugs at any point in the processing infrastructure can cause correct data, whether host measurements or router signals, to be mutated or delayed. For example, in one outage, a new rollout of the topology instrumentation service introduced a bug that did not wait for all routers to provide their link statuses before stitching together the topology, thus providing the SDN controller with a partial view of the topology. 
In a second outage, a bug in a different instrumentation service caused it to misreport the liveness of particular links, leading once again to a partial topology view. 

\subsection{How Are Incorrect Inputs Detected Today?} 
Our discussions with operators indicate that, as \cameraready{one may expect}, they perform sanity checks on inputs \cameraready{to try to catch incorrect inputs}. 
However, these checks are typically \emph{static} and designed primarily to prevent impossible values, \ie inputs that cannot occur, such as topologies claiming more nodes than actually exist in the network.
Operators also apply static checks to flag unlikely inputs, relying on heuristics derived from historical values, past outages, and other experience. Unfortunately, these checks are often ad-hoc, defined in response to specific incidents rather than through a systematic process that considers inputs and network state holistically, as we aim to do. 
Operators further noted that these checks are difficult to maintain: they accumulate over time and must be updated continually to reflect evolving practices.
More critically, such heuristics are risky, producing false positives where atypical but valid inputs are discarded---for example, during a disaster that affects many routers simultaneously. Ultimately, the limitations of static checks are evident in the fact that they failed to detect the incorrect inputs that led to the outages we analyzed. For example, despite the existence of a static validation check that no single metro region was missing all routers, an outage was caused by a topology input bug where a large portion (but not all) of routers were dropped from many metros. 

\subsection{Bad Input Causes a Bad Day}

\cameraready{We walk through one situation in detail to illustrate how bad-input-caused outages arise despite the best intentions of network operators. In one outage we observed~\cite{incident-report-buggy-input}, an update was rolled out to a subset of the regional jobs that read interface telemetry directly from routers in the network, and aggregate it into abstract views of the topology. This update introduced a race condition bug, which when triggered caused the jobs to not properly wait for responses from all routers before constructing and passing up an abstract connectivity graph from the routers in their region. As a result, these connectivity graphs appeared to be missing a significant portion of the actually available capacity. 

The top-most abstract aggregator job received these partially incomplete sub-aggregations, and stitched them to produce a final global abstract topology missing roughly a third of actual available capacity, which was then given to the SDN controller to use for pathing and traffic placement.

The SDN controller produced paths using the available capacity in the abstract topology, fitting as much demand as possible but was unable to fit all demand because of the lack of capacity. This led to traffic throttling and congestion.

The SDN controller did have some static checks in place to catch badly malformed inputs; for example, checking that the provided topology was not empty, or that no single region was empty. These did not prevent the input from proceeding because the topology was not empty, and all regions had at least some capacity.

We note that in this scenario, the SDN controller's solver produced correct results given its inputs, i.e., the paths produced by the SDN controller were the best possible paths for a topology missing the capacity stated in the abstract topology. Rather, the issue was that the input given to the controller did not accurately reflect reality. 
}

\medskip
\textbf{In summary}, incorrect inputs to the SDN controller arise for a variety of reasons, and the current practice of using static sanity checks is both insufficient and risky. Given the scale and complexity of production networks, eliminating these bugs entirely via extensive testing or formal verification is also intractable~\cite{blastshield}. 
Motivated by this state of affairs, we propose \emph{input validation} as a complementary strategy that acts at runtime to contain the impact of such bugs.
\section{\sys Overview} 
\label{sec:overview}

We now present an overview of \sys, a system that continuously validates inputs to the SDN controller against the current \textit{network state}, which is defined in terms of low-level dataplane signals exposed by routers.
We begin by describing the overall architecture of \sys (\cref{sec:goals}), then introduce the specific router signals it collects and explain why they are suitable for validation (\cref{sec:signals}).
Finally, we provide intuition for how \sys exploits the inherent redundancy in network behavior to construct a reliable view of the current state, even in the presence of noisy or faulty signals (\cref{sec:invariants}).
We describe the design of \sys in detail in \cref{sec:design}.

\subsection{System Architecture}
\label{sec:goals}

\topic{\cref{fig:system-design} presents an architectural overview of \sys, which validates SDN controller inputs in three stages.} 
First, in the 
\ding{192} \emph{collection} step, router signals and controller inputs are continuously streamed into a centralized backend database.  
Second, in the \ding{193} \emph{repair} step, \sys constructs a reliable network-wide view of the current state from these collected signals.  
Third, in the \ding{194} \emph{validation} step, \sys checks whether the controller inputs are consistent with this reconstructed state and classifies them as either \emph{correct} or \emph{incorrect}.  
We adopt this binary decision model for simplicity, though \sys could be easily extended to \cameraready{additionally} abstain \cameraready{if} it detects that too many router signals are missing or corrupt for it to reach a confident verdict.

\topic{\sys's architecture is motivated by two considerations.}
First, decoupling \sys from the SDN control plane and streaming data into a dedicated backend allows the validation logic to remain simple, and isolated from shared failure modes.  
Second, repairing the network state holistically, enables \sys to exploit the inherent redundancy in router signals (\eg due to flow conservation across the two ends of a link or at a router) to identify and tolerate faulty signals and measurement noise.
Together, these choices enable \sys to maintain a near-zero False Positive Rate (FPR), \ie avoid flagging correct inputs, while achieving a high True Positive Rate (TPR), \ie reliably detecting incorrect ones.

\begin{figure}
    \centering
    \includegraphics[width=1\linewidth]{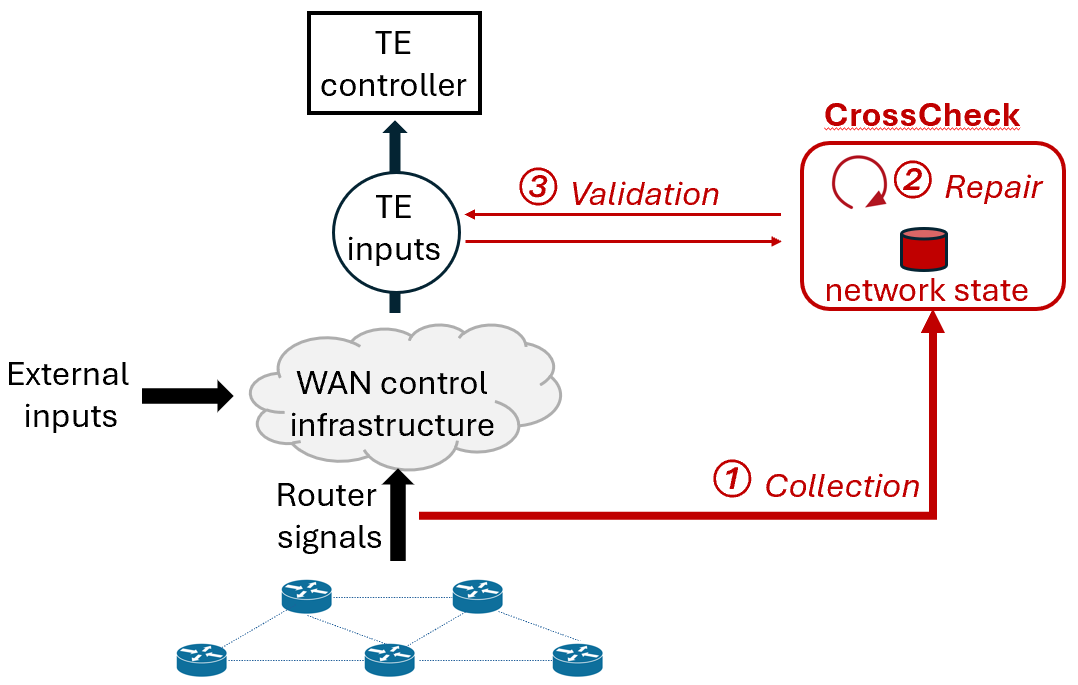}
    \caption{\sysname high-level system design.
    }
    \label{fig:system-design}
\end{figure}

\eat{
    \begin{figure}
        \centering
        \includegraphics[width=\linewidth]{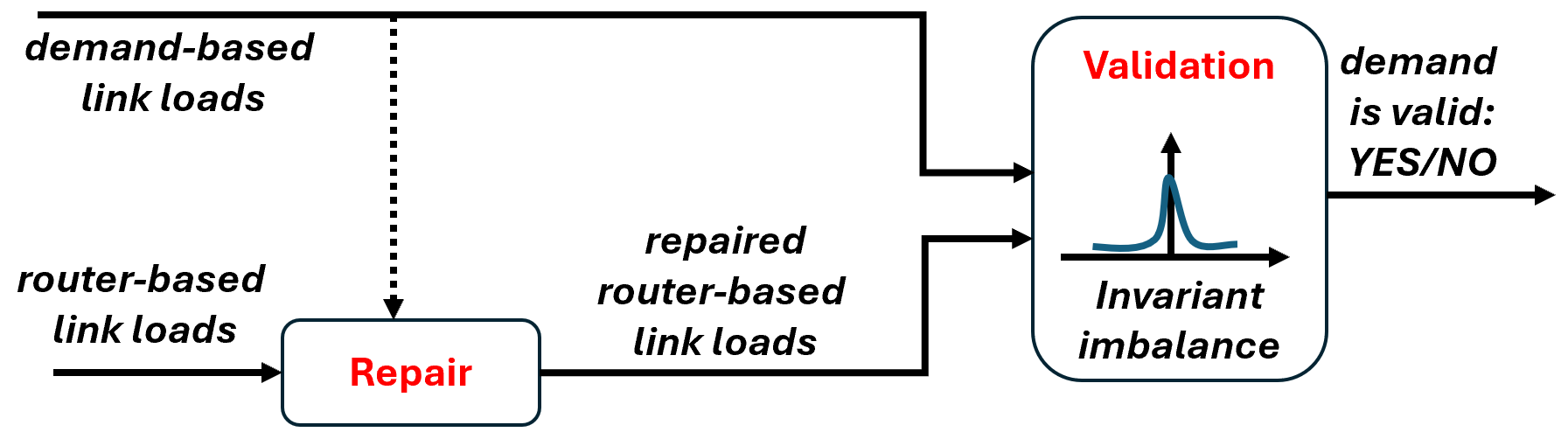}
        \caption{\sysname overview. \IK{I killed DTP and removed the DBs+processing, can bring back if needed}
        }
        \label{fig:overview}
    \end{figure}
}

\subsection{Collected Router Signals}
\label{sec:signals}

\topic{\sys collects three types of router signals.} 
We chose these signals because they are directly relevant to the inputs being validated (namely, demand and topology), and because they are consistently available across router platforms via standardized APIs~\cite{gnmi,gribi}.
\cref{tab:signals} summarizes the collected signals, which we explain below.

\begin{table}
\centering
\resizebox{\linewidth}{!}{
    \begin{tabular}{@{}l l c c@{}}
    \toprule
    \multirow{2}{*}{\bfseries Type} & \multicolumn{2}{c}{\bfseries Signal} & \multirow{2}{*}{\bfseries Notation} \\
    \cmidrule(lr){2-3}
    & \bfseries Name & \bfseries Location & \\
    \midrule
    \multirow{4}{*}{Link status indicators} & \multirow{2}{*}{Physical status} & egress & $l^X_{\mathit{phy}}$\\
    & & ingress &  $l^Y_{\mathit{phy}}$\\
    & \multirow{2}{*}{Link-layer status} & egress &  $l^X_{\mathit{link}}$\\
    & & ingress &  $l^Y_{\mathit{link}}$\\
    \midrule
    \multirow{2}{*}{Link counters} &  \multirow{2}{*}{Counters} & transmit & $l^X_{out}$ \\
    & & receive & $l^Y_{in}$\\
    \midrule
    Forwarding entries & Entries & router $X$ & $F_X (\to l_{demand})$ \\
    \bottomrule
    \end{tabular}
}
\caption{Collected router signals and their notations.}
\label{tab:signals}
\end{table}


\T{(1) Link status indicators.}
For each directed link $l$ from router $X$ to $Y$, \sys collects four link status indicators. 
First, the \emph{physical status} of the link at each router, denoted $l^X_{\mathit{phy}}$ and $l^Y_{\mathit{phy}}$, which reflects physical-layer conditions such as the detection of optical signals. 
Second, the \emph{link-layer status} at each router, denoted $l^X_{\mathit{link}}$ and $l^Y_{\mathit{link}}$, which indicates whether each router considers the link active, based on the successful exchange of heartbeat messages with the remote endpoint. 
\sys does not require additional heartbeat traffic for these signals; instead, it reuses the link status computed by existing protocols such as BFD~\cite{rfc5880bfd, rfc7130bfd-lag}, which are already deployed on modern routers.

\T{(2) Link counters.} 
\sys collects hardware counters from each router interface that track the cumulative number of bytes sent and received.
For a directed link from $X$ to $Y$, \sys uses the transmit counter at $X$ ($l^X_{out}$) and the receive counter at $Y$ ($l^Y_{in}$).
To compute traffic rates, \sys samples these counters periodically and derives per-interval rates from the difference in values and timestamps. For simplicity, we continue to refer to these as $l^X_{out}$ and $l^Y_{in}$.

\T{(3) Forwarding entries.}
Finally, \sys collects the forwarding table $F_X$ from each router $X$.
At ingress routers, $F_X$ specifies how incoming traffic is encapsulated into tunnels (via encapsulation rules), and at transit routers, it determines how each tunnel is forwarded through the network.
By combining forwarding entries across routers, \sys reconstructs the path of each tunnel and estimates the load contributed by each demand $D_{I,J}$ on every link. 
We denote this estimated load on link $l$ as $l_{demand}$.

\smallskip
We highlight two important properties of the signals described above.
First, among the seven link signals
only two---namely the physical statuses $l^X_{\mathit{phy}}$ and $l^Y_{\mathit{phy}}$---are used to compute the inputs to the SDN controller (specifically topology), and only one---namely the estimated load $l_{\mathit{demand}}$---is dependent on the inputs to the SDN controller, since $l_{\mathit{demand}}$ is derived using the demand input. 
The remaining four signals are thus completely independent of the controller inputs, hence offering a distinct view of the network state that can be used to validate those inputs.
That said, \sys does not rely solely on these independent signals, since excluding the input-dependent signals would increase vulnerability to faults in the others (\S\ref{sec:eval}).

Second, each of the signals is collected from a different component within a router, which reduces (though does not eliminate) the possibility that they are all simultaneously buggy.
In our analysis of all major outages logged over a five-year period, we found zero occurrences of simultaneous bugs across even two of the above three classes. 
\sys leverages this independence when validating the network topology, which we describe in \cref{sec:design-topology}. 
    
\subsection{Network Invariants}
\label{sec:invariants}

If the router signals collected by \sys were always accurate, validating SDN controller inputs would be straightforward.
For instance, one could validate demand by comparing the router-measured load on a link, $l_{\mathit{router}} = (l^X_{\mathit{out}} + l^Y_{\mathit{in}})/2$, with the load inferred from the demand input, $l_{\mathit{demand}}$.
Similarly, one could validate topology by checking whether the link-layer status indicators $l^X_{\mathit{link}}$ and $l^Y_{\mathit{link}}$ agree with the controller’s view of the link’s state.

In practice, however, router signals are often noisy, missing, or \cameraready{even} incorrect (\cref{sec:background}).
Hence, \sys first detects and corrects such faults in a process we call \emph{repair}.
This repair step enables \sys to reconstruct an accurate view of the network’s state, which it then uses to validate controller inputs while maintaining a low false positive rate.

\sys's repair process leverages the observation that router signals are \emph{strongly correlated} across the network. As a result, faults in one signal often produce inconsistencies with others.

Concretely, \sys leverages four network invariants that must hold in any correct network to repair faulty router signals.
We now define these invariants for a directed link $l$ from router $X$ to router $Y$, and describe how \sys performs the repair in the next section.

\T{(1) Link invariants.} 
These reflect the requirement that (i)~both ends of a link must agree on its operational status, in both the physical and link layers: 
\begin{equation}
l^X_{phy} = l^Y_{phy} = l^X_{link} = l^Y_{link},
\label{eq:link-status}
\end{equation}
and (ii)~the link must preserve flow conservation:
\begin{equation}
l^X_{out} = l^Y_{in}
\label{eq:link-counters}
\end{equation}

\T{(2) Router invariants.} 
Every router must also obey flow conservation, \ie the total incoming traffic should equal the total outgoing traffic:

\begin{equation}
\sum_{\forall l} ~l^X_{in} = \sum_{\forall l} ~l^X_{out} 
\label{eq:router}
\end{equation}

\T{(3) Path invariants.} 
The traffic demand estimated using the forwarding tables should match the load observed on the corresponding link. 

\begin{equation}
l_{demand} = l^X_{out} = l^Y_{in} 
\label{eq:path}
\end{equation}

\begin{figure*}
  \centering
    \begin{subtable}[b]{0.24\textwidth}
    \centering
    \resizebox{\linewidth}{!}{
      \begin{tabular}{lc}
        \toprule
        \textbf{Invariant Status} 
        & \textbf{Fraction of total time} \\
        \midrule 
        {Agreement} & 99.98\% \\
        {Disagreement} & 0.02\% \\
        \bottomrule
      \end{tabular}
      }
      \caption{Link status invariant}
      \label{fig:inv:d}
  \end{subtable}
  \hfill
  \begin{subfigure}[b]{0.24\textwidth}
    \centering
    \includegraphics[width=\textwidth]{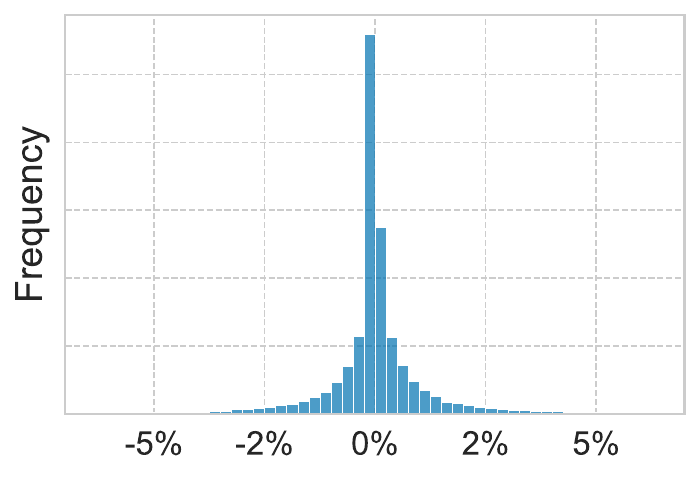}
    \caption{Link invariant}
    \label{fig:inv:a}
  \end{subfigure}
  \hfill
  \begin{subfigure}[b]{0.24\textwidth}
    \centering
    \includegraphics[width=\textwidth]{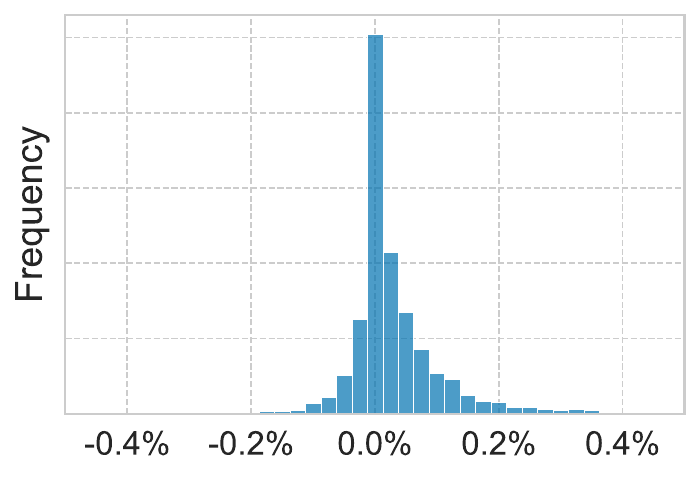}
    \caption{Router invariant}
    \label{fig:inv:b}
  \end{subfigure}  
  \hfill
  \begin{subfigure}[b]{0.24\textwidth}
    \centering
    \includegraphics[width=\textwidth]{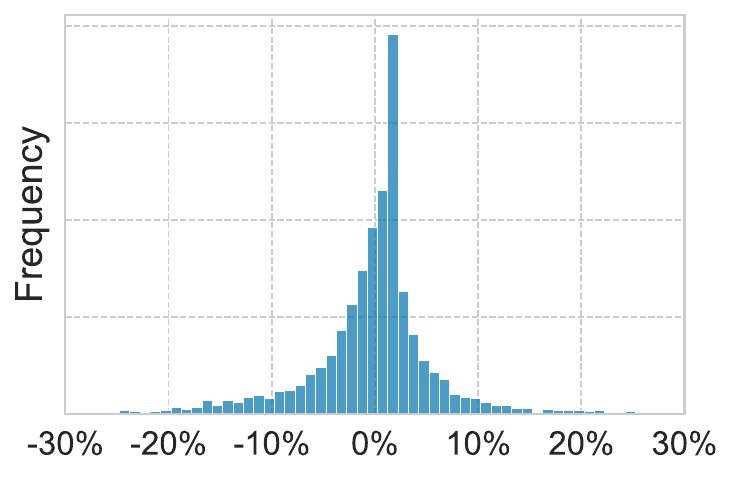}
    \caption{Path invariant}
    \label{fig:inv:c}
  \end{subfigure}
  \caption{Measured imbalance in our network invariants for a large production WAN. For (b)-(d), $0$\% implies that the equality holds perfectly.
  }
  \label{fig:inv}
\end{figure*}

\T{Why invariants may not hold.} While each of the above invariants is intuitive and is expected to hold in an ideal network, many regular real-world effects can conspire to break these invariants in practice, even without the occurrence of any bug. Invariants may not hold due to effects in (1)~the data plane, including packet propagation, queuing and processing delays, since these invariants assume zero delays, as well as packet drops, since the path invariant assumes that all demand arrives at all routers along its path; (2)~the control plane, since paths are not updated atomically across all routers; and (3)~the measurement layer, for example due to  measurement noise and loosely-synchronized clocks. 

\T{Measured real-world invariants.} To confirm \cameraready{the degree to which} these invariants hold in practice \cameraready{in the face of} the above real-world effects, we analyzed router telemetry data from two large production WANs $A$ and $B$ with O(100) nodes and O(1000) nodes respectively, over five-minute windows in a two-week period. \cref{fig:inv} depicts the results for WAN $A$. The results for WAN $B$ are similar and presented in \cref{sec:WAN_B}.

We first measured the percentage of link status reports in which our status-agreement invariant holds, \ie 
the routers on either side of a link agree on whether the link is up or not.  
We see disagreement on the link status only 0.02\% of the time and hence our invariant holds 99.98\% of the time (\cref{fig:inv:d}). 

We also check the degree to which our load-related invariants 
(\ie Eqs.~(\ref{eq:link-counters})--(\ref{eq:path}))
hold in practice. \cref{fig:inv:a} examines the link invariant derived from flow conservation: we plot the PDF of the difference between the link counters on each end of a link  (Eq.~\ref{eq:link-counters}) and see that they differ by less than 4\% for 95\% of links.  
Similarly, \cref{fig:inv:b} examines our router invariant, plotting the PDF of the difference between incoming \vs outgoing rates at a router. We see that our router invariant (Eq.~\ref{eq:router}) holds within 0.21\% for 95\% of routers. 
This is the tightest load invariant, because all measurements are local to a router.
Nonetheless, minor differences remain due to measurement  offsets, packets in flight through the router, and dropped packets. 
Finally, \Cref{fig:inv:c} examines our path invariant, plotting the PDF for the difference between $l_{demand}$ and the average of $l^X_{out}$ and $l^Y_{in}$ (Eq.~\ref{eq:path}). We see that the path invariant  holds fairly well, but with a larger tail: for 75\% of links the difference is within 5.6\%, with a difference of 15.3\% at the 95-percentile. 
The main reason is that paths may be updated during the measurement period and collecting path information is not synchronized (across routers or with path updates), introducing discrepancies between the load estimated from demand inputs \vs the measured link loads. 

Since our load-related invariants do not hold exactly, we define a threshold $\NN$ and say that an invariant holds if the relevant measurements are within $\NN$ of each other. In our experiments, we set $\NN=5$\,\%, which corresponds to the 96.5th, 100th and 71.7th percentiles in \cref{eq:link-counters,eq:router,eq:path} respectively. \sys will then use this threshold $\NN$ in the repair algorithm to determine when two estimates for a link load can be deemed as equivalent (\cref{sec:repair}).

\section{\sys Repair and Validation} 
\label{sec:design}

We now describe \sys's repair and validation steps in detail.
Recall that the two inputs to be validated are \textit{demand} and \textit{topology}.
Since both rely on a common repair step to construct a reliable view of the current network state, we begin by describing this shared repair process (\cref{sec:repair}).
We then present the validation logic, which is different for each input (\cref{sec:design-validating-demand} and \cref{sec:design-topology}).

\subsection{Repair}
\label{sec:repair}
The goal of \sys's repair algorithm is to derive a reliable value for the traffic load on each link (denoted $l_{final}$).
\sys then uses the derived $l_{final}$ not only to validate the demand (by comparing it against {$l_{demand}$}), but also as an additional source of ground truth for validating the topology, since $l_{final} > 0$ if and only if the link is up. 
For simplicity, we describe our repair algorithm informally in this section; the full algorithm is shown in \cref{s:dtp-pseudocode}.

\T{Voting.} The general strategy of our repair algorithm is to accumulate multiple estimates for a link's load, each obtained from different sources. We call these estimates \textit{votes}, and rely on a simple majority vote to select $l_{final}$.
Given a link $l$ from $X$ to $Y$, the collected router signals already give us three votes: $l_{demand}$, $l^X_{out}$, and $l^Y_{in}$.  Including $l_{demand}$ as a vote might seem non-intuitive given it is computed from the high-level demand inputs. However, this is a deliberate choice we make to avoid false positives in the face of incorrect router telemetry: \ie because $l_{demand}$ is independent of router counters, it can "vote against" the estimates derived from buggy counter values. Our evaluation in \S\ref{sec:eval-factor-analysis} confirms the value of this design choice. 
However, simply taking the majority of these 3 votes does not give us sufficient resilience to correlated bugs. 
For example, a counter bug at both routers $X$ and $Y$ could impact $l^X_{out}$ and $l^Y_{in}$ in the same way. 

\begin{figure}
\centering
\includegraphics[width=0.9\columnwidth]{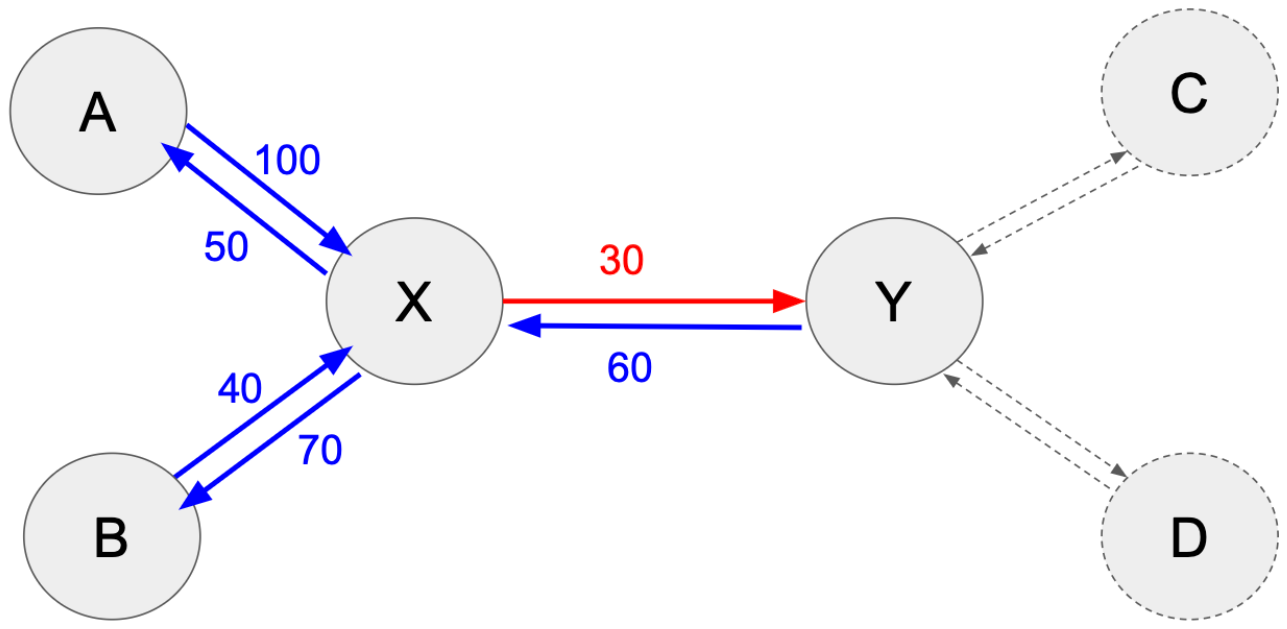}
\caption{Example network with a faulty router signal (in red).}
\label{fig:DTP-fig}
\end{figure}

\T{Deriving additional votes from router invariants.} 
\sys derives additional votes based on the observation that signals across routers are heavily correlated via the router invariants.
For example, consider the link $X \rightarrow Y$ in Fig.~\ref{fig:DTP-fig}.
Applying our router invariant at $X$ (\ie incoming load equal outgoing load) using the blue links, \cameraready{we can estimate the load on} $X\rightarrow Y$ as $(100+40+60)-(50+70)=80$. Doing the same at $Y$ would yield an additional estimate. Thus we can obtain 2 additional votes by taking into account the load on the \emph{neighboring} links of $X \rightarrow Y$. 

However, this \cameraready{raises} the question: When computing the router invariant at each router, which vote should \sys use for each link?
Specifically, while the example in Fig.~\ref{fig:DTP-fig} shows a single value for the load on each neighboring link, in practice, it can choose three baseline estimates for each neighbor link. 
For example, the load on the link $m$ from $A\rightarrow X$ could be estimated using $m^A_{out}$ or $m^X_{in}$ or $m_{demand}$.

Ideally, we would like to leverage all possible votes, since each of them comes from a distinct source and thus increases our tolerance to correlated bugs across multiple routers. 
For example, considering $m^A_{out}$ when we apply router invariants at $X$ could allow us to catch bugs that impact $X$ but not $A$. More generally, if a sufficient number of the routers neighboring $X$ and $Y$ are bug-free and we start counting their votes, we could obtain a majority vote that overrides the incorrect counter estimates provided by $X$ and $Y$ (as later formalized in \cref{thm:dtp}).

\T{Multiple voting rounds.} Taking into account all the votes for each link rapidly leads to a state explosion problem when we consider all possible combinations for each link and router. %
For instance, just in the small example of Fig.~\ref{fig:DTP-fig}, there are three estimates for each of the five blue links that surround router $X$, yielding $3^5=243$ tuples of estimates. 
To address this, the \sys repair algorithm runs many \emph{rounds} of voting. 
In each round, it randomly picks one of the three candidate values for each link $l$ ($l_{demand}$, $l^X_{out}$, and $l^Y_{in}$), and then applies the router invariant at each router in the resulting topology.
Thus, after $N$ rounds of voting, each router $X$ has $N$ votes for the load at each of its incident directed links. We then merge the votes in agreement, \ie those within the previously-defined threshold of $\NN=5\%$,  
taking their average as their representative. 
The value cluster with the maximum number of votes serves as $X$'s estimate for that link's load as derived from router invariants, which we denote as $l^X_{rtr}$. In addition, we compute a corresponding  weight score (denoted $w^x_{rtr}$) which is simply the fraction of the $N$ votes that were cast for $l^X_{rtr}$. 

\T{Consolidating votes: from five to one. }  
Our procedure so far results in five estimates for each link $l$'s load: 
$l^X_{out}$, $l^Y_{in}$, $l_{demand}$, $l^X_{rtr}$ and $l^Y_{rtr}$. 
We assign a weight of $1.0$ to the first three estimates, and weights $w^x_{rtr}$ and $w^y_{rtr}$ as above to the last two.
\footnote{We choose $1.0$ for the first three estimates based on the intuition that they rely on direct measures at $l$'s incident routers $X$ and $Y$; our evaluation shows that this simple choice performs sufficiently well (\S\ref{sec:eval}).} 
Our next step is to derive a single final estimate (\ie $l_{final}$) from this set. 
For this, we cluster votes that are within our noise threshold, summing the weights of any clustered values, and then pick the estimate with the largest cumulative weight as our \emph{tentative} final estimate for $l$'s load, with the corresponding cumulative weight as our confidence score for this estimate.  
However, before proceeding to finalize a link's load estimate, we undertake one final step as described below.

\T{Gossip before finalizing.} Our procedure so far uses a one-shot vote approach, which leaves us vulnerable to local pockets of correlated bugs. For example, in Fig.~\ref{fig:DTP-fig}, repair fails if routers $X$, $Y$, $A$, and $B$ all suffer from the same counter bug, since only the $l_{demand}$ estimates would be exempt from the bug, and these would be out-voted by the counter-derived estimates. 

Loosely inspired by gossip algorithms, we want to give a chance for values with high confidence to propagate and influence other values. We achieve this by using the repair algorithm \emph{iteratively}. At each iteration, \sys only finalizes the link with the highest confidence, 
and then repeats the above process of running multiple rounds, tallying votes, and so forth. Namely, \sys picks the link $\tilde{l}$ with the highest confidence score and assigns its (previously tentative) final estimate to $\tilde{l}_{final}$.
Crucially, once finalized, a link's estimate is fixed in all subsequent iterations. 
Thus for a topology with $L$ links, \sys runs the above repair process $L$ times, finalizing one link's estimate each time, until it arrives at a value of $l_{final}$ for each link $l$. 

At this point, \sys has a reliable (though not necessarily perfect) value for the load on each link. 
We next describe how it uses this value to validate the inputs to the SDN controller, beginning with the demand input, followed by the topology input.

\subsection{Validating Demand}
\label{sec:design-validating-demand}

When validating demand, \sys needs to answer the question:
Given the repaired link loads ($l_{\mathit{final}}$), how can we determine whether the demand matrix provided to the SDN controller is correct, in a manner that ensures a near-zero FPR and a high TPR?

A natural approach is to apply the updated path invariant, \ie check whether the demand-induced load $l_{\mathit{demand}}$ matches the repaired link load $l_{\mathit{final}}$, and alert the operator when the two differ. However, this raises the question: is any single violation of this invariant sufficient evidence of a faulty demand input? Or could such violations arise from the larger noise in path invariants and/or residual errors in router telemetry that were not fully corrected during the repair step?

To correctly identify incorrect inputs, \sys leverages the observation that \emph{the pattern of inconsistencies between $l_{\mathit{demand}}$ and $l_{\mathit{final}}$ can reveal the underlying cause}. 
Specifically, incorrect demand inputs tend to induce \emph{widespread} violations of the path invariant: for example, a single erroneous entry $D_{IJ}$ in the demand matrix will manifest as inconsistencies along all links traversed by traffic between routers $I$ and $J$. In contrast, noise and residual faults in the telemetry of one router produce \emph{localized} inconsistencies, typically confined to the set of its adjacent links.

To check for widespread violations of the path invariant, \sys proceeds in two steps. First, given some threshold $\tau$, it considers that a path invariant holds at a link $l$ when its imbalance ($l_{\mathit{final}} - l_{\mathit{demand}}$) falls within $\tau$. 
\cameraready{Then, in order to make a top-level validation decision, \sys relies on a validation cutoff $\cutoff$, classifying} a demand input as correct when the fraction of links for which the path invariant holds is above $\cutoff$, \ie when it does not encounter widespread violations. 
On the other hand, a buggy demand will encounter few links where its imbalance falls within $\tau$, and therefore its fraction of satisfied links is expected to be below $\cutoff$, \ie it will be classified as incorrect.

At each new WAN, \sys sets $\tau$ and $\cutoff$ after an initial calibration phase, where it collects telemetry data and input demand matrices during a known-good period (\eg after 
an operator confirmation of stability). Given the collected path imbalance distribution during this calibration phase, $\tau$ is automatically set at the 75th percentile of this distribution.\footnote{This is a heuristic: a large percentile would accept large imbalances and therefore miss bugs that affect small demand volumes, while a small percentile may be too sensitive to counter perturbations.}
Then, for each recorded time interval, \sys applies the repair procedure, computes the number of links satisfying the path invariant, and records the resulting fraction. 
To maintain a near-zero FPR, \sys sets $\cutoff$ to \cameraready{just below} the minimum value observed across this calibration window. For example, in WAN $A$, \sysname first went through a two-week calibration period. It set $\tau=5.588$\,\% as the 75th percentile of path-invariant imbalance, and $\cutoff=71.4$\,\%.

At runtime, \sys validates demand inputs by comparing the observed consistency rate to this threshold. If the fraction of links on which the path invariant holds exceeds $\cutoff$, the input is classified as \emph{correct}; otherwise, it is flagged as \emph{incorrect}.
\cref{fig:psuedo-dtp-validation} summarizes this procedure in pseudocode. 

\T{Configuring hyperparameters.} 
\cameraready{
CrossCheck's validation algorithm relies on four key hyperparameters mentioned above that balance detection sensitivity against false positive rate. We summarize and compare them here:

\Ts{(1) The noise threshold} $\NN$ accommodates inherent measurement noise in production networks by defining when two load estimates are considered equivalent during the repair process. The network operator sets it based on an empirical analysis of the distribution tail of the difference in counters, which we set conservatively to be 5\% given the link and router distributions shown in \cref{fig:inv:b,fig:inv:c}. 

\Ts{(2) The number $N$ of voting rounds} in the repair algorithm  determines how many random combinations of link estimates are explored when applying router invariants, with more rounds providing greater resilience to correlated failures at the cost of increased computational overhead. We found $N=20$ to be effective for our network. The optimal $N$ is correlated with average node degree.

\Ts{(3) The imbalance threshold} $\tau$ determines the acceptable discrepancy between demand-induced load estimates and repaired link loads for individual links, effectively controlling the amount of noise in top-level invariants that our system is tolerant of. It is set by observing the distribution of invariant imbalances over a system observation period, which we show for our network in \cref{fig:inv:c}. 

\Ts{(4) The validation cutoff} $\cutoff$ applies to the final top-level validation decision, specifying what fraction of interfaces must satisfy the path invariant for the entire demand input to be classified as correct. This threshold is critical for maintaining near-zero false positive rates and is likewise calibrated during an initial observation period by setting it to be just below the minimum consistency rate observed during known-good operation.}

{
    \SetAlgoNoEnd
    \begin{algorithm}[t]
    \SetAlgoLined
    
    satisfied\_count = 0\;
    \ForEach{link l}{
        \If{percent\_diff(l.demand, l.final) $\leq \tau$}{
          satisfied\_count += 1\;
        }
    }
    
    \Return satisfied\_count / num(links) > $\cutoff$\;
    \caption{Demand validation.}
    \label{fig:psuedo-dtp-validation}
    \end{algorithm}
}

\subsection{Validating Topology}
\label{sec:design-topology}

Validating the topology input is comparatively straightforward, as \sys can rely on \emph{five} independent signals to directly corroborate the status of each link. 
These signals are considered independent because they originate from distinct subsystems within the router.

For a directed link $l$ from router $X$ to router $Y$, \sys uses the following signals: $l^X_{\mathit{phy}}$, $l^Y_{\mathit{phy}}$, $l^X_{\mathit{link}}$, $l^Y_{\mathit{link}}$, and whether $l_{\mathit{final}} > 0$.
The first two represent the physical-layer status as reported by $X$ and $Y$, respectively.
The next two reflect the link-layer status as determined by heartbeat protocols like BFD.
The fifth signal, $l_{\mathit{final}}$, reflects observed traffic on the link after the repair step and is computed using counters from across the network, making it independent of the others.

Given these independent sources, \sys applies a simple majority vote across the five signals to determine the operational status of each link.
In our evaluation, this approach successfully resolved the rare 0.02\% of cases where the link status invariant was violated (\cref{fig:inv:d}), and detected all instances of incorrect topology inputs encountered in our log of outages.

\subsection{Analyzing {\sys}’s Algorithms}

\T{Repair guarantees.}
Assume that the input demands and paths are correct, and that we set the threshold $\NN$ high enough to capture regular noise when consolidating votes in the repair algorithm. 
Then the \sys repair algorithm can provably detect and repair any corrupted counters at an arbitrary single (directed) link if counters at other links are fine. \cameraready{We show the full proof in \cref{sec:proof}.}

\begin{theorem}\label{thm:dtp} 
\sys is guaranteed to detect and repair any corrupted counters when corruption is restricted to an arbitrary single link.
\end{theorem}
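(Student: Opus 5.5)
The plan is to work under the stated hypotheses: demands and paths are correct, and $\NN$ absorbs regular noise. Let $l$ (from $X$ to $Y$) be the single link whose counters $l^X_{out}$ and $l^Y_{in}$ may be arbitrarily corrupted. Every other link $m$ then has three baseline estimates $m^A_{out}$, $m^B_{in}$, $m_{demand}$ that all agree within $\NN$ with its true load $m^*$. Also $l_{demand}$ agrees with $l^*$, because demands and paths are correct. The proof tracks the gossip iterations and shows two things: (i) no link is ever finalized to a wrong value, and (ii) $l$ itself is finalized to $l^*$, up to $\NN$. That gives repair. Detection follows, because a corrupted counter that differs from $l^*$ by more than $\NN$ ends up outside the winning cluster.

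The first step is to analyze the router-invariant votes. Fix any router $Z$ and any incident link $e$. In a voting round, the router-invariant estimate for $e$ at $Z$ is computed from the sampled values of the other links incident to $Z$.
\begin{itemize}
\item If $l$ is not among those other links, every sampled value is within noise of truth. Flow conservation, Eq.~(\ref{eq:router}), then makes all $N$ votes within noise of $e^*$, so $w^Z_{rtr}=1$ and $e^Z_{rtr}\approx e^*$.
\item If $l$ is among them (only possible when $Z\in\{X,Y\}$), a vote is correct exactly when $l_{demand}$ was sampled for $l$. If both counters of $l$ are corrupted, this happens with probability $1/3$.
\end{itemize}

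The second step is to consolidate the five weighted votes for each link.
\begin{itemize}
\item \textbf{Links $m\neq l$ not incident to $X$ or $Y$.} All five votes agree, giving confidence $5$.
\item \textbf{Links $m\neq l$ incident to $X$ or $Y$.} The three direct votes alone (weight $3$) are correct. The wrong votes can total at most $w^X_{rtr}+w^Y_{rtr}\le 2<3$, so the winning cluster is correct.
\item \textbf{The link $l$.} The votes $l^X_{rtr}$ and $l^Y_{rtr}$ have full weight and are correct, because their neighbor sets exclude $l$. With $l_{demand}$ this gives correct weight $3$, against at most weight $2$ from the two corrupted counters. So the tentative value for $l$ is correct, with confidence at least $3$.
\end{itemize}

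The third step is to check that this holds at every gossip iteration. Finalizing links only fixes their values to the correct clusters established above, so the bounds carry over. In later iterations, once a correct estimate for a neighbor of $l$ is fixed, the sampled values only get better.

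The main obstacle I expect is handling the clustering carefully. If $\NN$-clusters chain, merged averages could drift, and a corrupted counter lying within $\NN$ of truth must be counted as "repaired" rather than as a wrong vote. I will address this by assuming, as the statement permits, that true noise lies well within $\NN$ and that corruptions are either within $\NN$ (harmless) or far outside it. I also need to confirm that the consolidation weights ($1.0$ for direct votes) make $3$ versus $2$ a strict majority in every case. A further check is the case where only one end of $l$ is corrupted, which is strictly easier. Ties in the confidence ordering do not matter, since every link's tentative value is correct.
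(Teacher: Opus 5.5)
Your argument is correct and is essentially the paper's proof. In both, the estimators untouched by the corrupted link outvote the at most two that it affects: for a neighbor, these are its own counters and demand estimate; for $l$ itself, they are $l_{demand}$ and both router-invariant estimates. Your bound $w^X_{rtr}+w^Y_{rtr}\le 2<3$ is in fact slightly more careful than the paper, which says only one of a neighbor's five estimators is affected; that claim fails for the reverse link $Y\to X$.

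The one case you should add is border (ingress/egress) links, which the paper treats separately. Such a link has only one counter and hence three estimators, so your statement that every link has three baseline estimates does not cover it. The same count still works there, giving $2$ correct against at most $1$ affected, both when $l$ is a border link and when a neighbor of $l$ is.
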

For example, in \cref{fig:DTP-fig}, even if the egress counter of $X$ and the ingress counter of $Y$ at link $X \to Y$ are \textit{both} corrupted (and the rest of the network only suffers from normal noise), \sys is guaranteed to repair link $X \to Y$. 

\medskip
\T{Scalability of demand validation with network size.}
Evaluating the consistency between $l_{\mathit{demand}}$ and $l_{\mathit{final}}$ at a network-wide level ensures that \sys becomes more accurate as the network grows; \ie the FPR decreases, and the TPR increases with larger networks. 

Formally, if we assume that the path-invariant imbalance follows an i.i.d. distribution across all links of all networks when given some healthy (resp., faulty) input demands, then:

\begin{theorem}\label{thm:exp}
As the number of links $n$ increases, both FPR and ($1 - \text{TPR}$) converge to zero exponentially fast.
\end{theorem}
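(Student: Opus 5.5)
We model the decision of Algorithm~\ref{fig:psuedo-dtp-validation} as a threshold test on an empirical mean and then apply a Chernoff--Hoeffding bound. For each of the $n$ links, let $Z_l = \mathbf{1}\{\text{percent\_diff}(l_{demand}, l_{final}) \le \tau\}$ be the indicator that the path invariant holds within $\tau$. Under the stated assumption, the $Z_l$ are i.i.d.\ Bernoulli. Their parameter is $p_h$ when the input demand is healthy and $p_f$ when it is faulty. Demand is declared correct iff $\bar Z_n = \frac{1}{n}\sum_l Z_l > \cutoff$.

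The key modeling step is the separation condition $p_f < \cutoff < p_h$. For healthy inputs this is how $\cutoff$ is calibrated: it is set just below the consistency rate seen in known-good operation, while $\tau$ is the 75th percentile of the healthy imbalance distribution, so $p_h \approx 0.75 > \cutoff$. For faulty inputs, the widespread-violation argument of \cref{sec:design-validating-demand} says a buggy demand produces few links within $\tau$, so $p_f < \cutoff$. I would state this separation explicitly as part of the hypothesis of \thm{thm:exp}, since the i.i.d.\ assumption alone does not give it. Stating it precisely is the main conceptual obstacle; without it the claim is false, for example if $\cutoff$ is placed above $p_h$.

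With separation in hand, both error probabilities are tail bounds. A false positive is the event $\bar Z_n \le \cutoff$ under the healthy model. Hoeffding's inequality gives
\[
\mathrm{FPR} = \Pr[\bar Z_n \le \cutoff] \le \exp\!\big(-2n(p_h-\cutoff)^2\big).
\]
A miss is the event $\bar Z_n > \cutoff$ under the faulty model, so
\[
1-\mathrm{TPR} = \Pr[\bar Z_n > \cutoff] \le \exp\!\big(-2n(\cutoff-p_f)^2\big).
\]
Both bounds decay exponentially in $n$, with rates $2(p_h-\cutoff)^2$ and $2(\cutoff-p_f)^2$, which proves the theorem. Using the Chernoff bound with relative entropy would give the sharper rates $D(\cutoff\,\|\,p_h)$ and $D(\cutoff\,\|\,p_f)$, but Hoeffding's inequality is enough.

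One further point to address is that $\cutoff$ must stay fixed, or at least bounded away from $p_h$ and $p_f$, as $n$ grows. This holds if $\cutoff$ is treated as a network-independent constant determined by the common imbalance distribution. The i.i.d.\ assumption across all links of all networks is exactly what makes this legitimate.
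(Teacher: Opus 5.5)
Your proposal is correct and matches the paper's proof. The paper also models the test as $n$ i.i.d. Bernoulli trials, simply posits a fixed cutoff with $p > \Gamma > p'$ (the same separation hypothesis you flag), and applies a Chernoff--Hoeffding bound to the binomial tail. The only cosmetic difference is that the paper states the bounds in the relative-entropy form $e^{-n D(\Gamma \parallel p)}$ and $e^{-n D(\Gamma \parallel p')}$, which you already mention as the sharper version of your additive Hoeffding rates.
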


The full proof is presented in \cref{sec:proof_exp}; we summarize the key intuition here.

\textit{Why the FPR decreases with network size.}
As the number of links grows, the distribution of invariant deviations observed across the network more closely matches the true underlying distribution. This happens because, with more samples (i.e., links), the overall pattern becomes more reliable. The DKWM inequality~\cite{massart1990tight,reeve2024short} 
formalizes this intuition by bounding how far the observed distribution can deviate from the true one, and it shows that such deviations become exponentially smaller as the number of links increases. Thus, in large networks, the fraction of links satisfying the invariant stabilizes, allowing \sys to set conservative thresholds that minimize false positives with high confidence.

\textit{Why TPR increases with network size.}
Larger networks naturally involve more traffic demand entries and more paths per demand. Hence, even a small number of incorrect demand entries affect a large portion of the network state.

Consider a 150-router network with 100 border routers and 50 transit routers. This setup yields $100^2 = 10{,}000$ distinct demand entries. If each demand traverses 5 routers on average and is multipath-routed over 4 disjoint paths, then a single demand results in roughly $5 \times 2 \times 4 = 40$ counter readings (ingress and egress at each router). Across all demands, this amounts to $10{,}000 \times 40 = 400{,}000$ total signal contributions.

Now, assume 5\% of demand entries are incorrect. These affect $0.05 \cdot 400{,}000 = 20{,}000$ signal contributions. Spread over an estimated $150 \cdot 10 = 1{,}500$ router interfaces (assuming average degree 5 and 2 counters per port), each interface experiences over $13$ error-induced discrepancies on average.

Thus, this analysis shows that even a modest fraction of incorrect demand inputs creates a dense pattern of invariant violations, allowing \sys to detect such faults with high confidence in large networks.
\section{Implementation} 
\label{sec:implementation}

We built a prototype implementation of \sys that performs input validation, using the architecture shown in \cref{fig:system-design}. 
The \sys prototype consists of two main components. The lower half is network-specific and handles the collection and storage of router signals, while the upper half is network-agnostic and performs repair and validation over these signals \cameraready{when provided as} input. 
\sys has been deployed as an experimental shadow system at a large cloud provider WAN (WAN~$A$), where it has been continuously validating the inputs to their TE controller over a four-week period.

\T{Collecting and storing router signals.}
The lower half of \sys is responsible for collecting and storing the raw telemetry required for validation. 
This component is intentionally kept simple and does not perform any of the aggregation performed by the control infrastructure, to reduce the chances of bugs.
Its implementation is specific to the telemetry interfaces available in the target network.

In WAN~$A$, \sys collects all telemetry via gNMI~\cite{gnmi, openconfig}, a standardized telemetry API supported across major router operating systems. 
\sys subscribes to physical and link-layer status event updates for each link, and samples byte counters every 10 seconds per interface~\cite{openconfig-interfaces}, \cameraready{emitted as} a stream of (\textit{timestamp}, \textit{total-bytes-in/out}) tuples from which transmit and receive rates are derived. 
Forwarding paths are reconstructed using abstract forwarding entries retrieved from routers, which specify how traffic is split and forwarded across tunnels. All telemetry is stored in an in-house, in-memory time-series database that is similar to systems such as TimescaleDB~\cite{timescale}, InfluxDB~\cite{influx}, and Monarch~\cite{monarch}.

A natural question is whether such a ``flat'' system that performs no aggregation can  scale. To answer this we calculate the write rate for a moderately-large network, storing roughly 10 metrics every 10 seconds from $O(10,000)$ interfaces. This rate of $O(10,000)$ writes per second is well within the peak performance of up to 2.4 million inserts per second shown for open-source time-series databases in prior work~\cite{timescale-white-paper}, and is easily accommodated by our in-house database system.

Finally, since raw router counters report monotonically increasing byte totals timestamped at each sample, \sys must convert these into byte-per-second rates for use in repair and validation. 
Our timeseries database supports a query language, which \sys uses to issue a short query~--~just five lines~--~that aggregates interface counters into bundles and computes rate estimates over time. 
Occasionally, counters may reset due to hardware overflows or router restarts; \sys explicitly detects and excludes such intervals from rate computation to avoid introducing spurious artifacts.

\T{Repair and validation.}
The upper half of \sys is responsible for consuming telemetry and executing the repair and validation procedures described in \S\ref{sec:design}. This component is designed to be general-purpose: it interfaces with any backend time-series database via a pluggable API that abstracts over telemetry retrieval.

The implementation is lightweight. The core repair and validation logic is written in Python and comprises approximately 250 and 100 lines of code, respectively, supplemented by a few hundred lines of orchestration code for scheduling runs and logging outputs. 
\sys exposes a simple \texttt{validate(demand, \cameraready{topology})} API, which takes a demand matrix and topology as input and returns a binary validation decision.

\T{Deployment as a shadow system.}
The WAN is mission-critical in cloud infrastructure and hence we are unable to directly integrate our research prototype into the critical path of the production TE system. However, we are able to run \sys as a ``shadow'' system (that is off the critical path) by leveraging an independent storage replica of the live TE database used in WAN~$A$. Thus \sysname runs on real-time inputs (demand and topology) and telemetry from WAN~$A$ but our validation \cameraready{decision} output is not integrated into the production TE system.  
\section{Evaluation}
\label{sec:eval}

We evaluate \sysname through a combination of deployment as a shadow system in a WAN (\cref{sec:eval-shadow-system}) and simulation (\cref{sec:eval-simulation}). The former allows us to test \sysname under realistic conditions, while the latter allows us to explore a range of possible error scenarios we did not encounter in production. Our metrics are the TPR and FPR, as defined in ~\cref{sec:introduction}, and our goal is to ensure a near-zero FPR, while maximizing the TPR. 

\subsection{Shadow Deployment Evaluation}
\label{sec:eval-shadow-system}

Here, we seek to answer the following questions: 
(1)~Will the noise inherent to production data trigger false positives? (2)~Will \sys be able to detect incorrect inputs in the real world? and (3)~What are \sys's performance characteristics when run on production data volumes? 

To answer these questions, we deployed \sys as an experimental shadow system in a cloud WAN (WAN~$A$), where it has been continuously validating the inputs to their TE controller over a four-week period. 
In porting \sysname from the lab to production, we discovered a handful of practical changes that were necessary. First, we adjusted our equality accounting for headers as the load estimates derived from demands were systematically 2\% lower than interface counters. We discovered that this is  because, depending on the vendor, counters on some routers include bytes due to packet headers while our demand inputs do not. Second, we observed that for datacenter-facing interfaces, we needed to include \textit{hairpinned} traffic---\ie traffic not routed on the WAN, but coming up and back down from the datacenter---when computing $l_{demand}$; again this traffic is not accounted for in the demand inputs but appears in router counters. 

\T{FPR and TPR in production.}
Over the 4 weeks that \sys was deployed, we saw \textbf{zero false positives}, showing that our \sys system is resilient to the inherent noise in real-world demand and telemetry data.

Additionally, our system  \textbf{correctly detected a rare instance of incorrect input data}: during a short period, the database holding demand was affected by a bug introduced in a new code release, which caused it to double-count the demand measured at the end hosts. As a result, all demands in this replica were doubled for most of a period of 3 days before the issue was detected manually and rolled back. 

\cref{fig:shadow-system-results} demonstrates the output of \sys's validation during the period that the input data was incorrect. 
We can see that incorrect inputs cause a steep drop in validation scores when demands are found to be inconsistent with the state of the network derived from the router signals.

\begin{figure}
\centering
\includegraphics[width=1\linewidth]{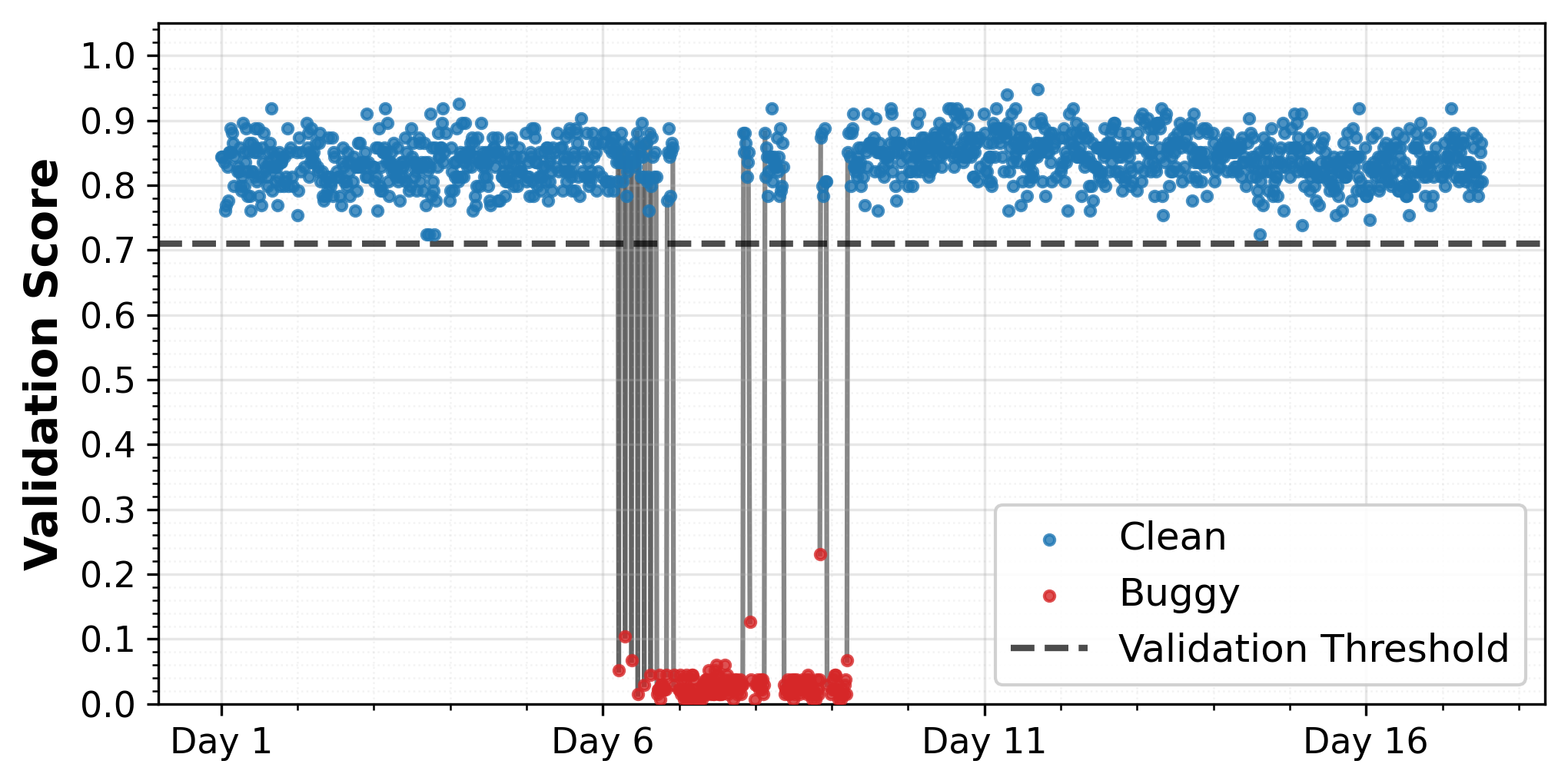}
\caption{Shadow-system validation on live production data from a large production WAN $A$.}
\label{fig:shadow-system-results}
\end{figure}

We highlight two takeaways from this incident. First, \sysname was able to detect an incorrect input where existing static checks failed to do so. Second, although the incorrect inputs did not affect our production TE controller (since this was a separate storage replica), they did affect some lower-importance capacity planning systems.\footnote{Capacity planning is done over a longer time-window \cameraready{and less frequently}, which is how the bug went undetected for multiple days.} Moreover, this class of bug might equally have impacted the production DB, leading to a real outage. Indeed, similar bugs triggered some of the outages we reported in \cref{sec:background}.

We also tested \sysname retrospectively over a separate storage replica that stores topology health data which is consumed by a network health sentry service that monitors and drains unhealthy links. Reading the production topology inputs and raw telemetry data, \sysname was able to successfully catch a scenario of invalid topology input that caused all healthy links at a router to be drained, leading to congestion causing an outage. If fully deployed, \sysname could have detected and prevented this outage, as it was able to detect that the links were up and healthy through the topology repair and validation process described in \cref{sec:design-topology}.

\T{System performance.}
Since \sys  must continuously validate inputs in real time, at the timescales at which SDN control decisions are made (typically minutes~\cite{swan, dsdn}), its runtime is equally important to its utility. 
We found that \sysname's runtime is within 10\,seconds on our large WAN inputs.
Router signals are typically available within the database within O(1\,s) of when they are produced by routers. 
We benchmarked the queries that produce aggregated and average counter values, and observed that they take $\sim$56\,ms to run.
The largest component of runtime is repair, which takes $\sim$9.1\,s to run, while validation takes O(100\,ms). Put together, our system runs well within our target window and adds minimal latency overhead to typical TE iteration times. For latency-sensitive use-cases, to prevent delaying the control system, \sysname can be run on inputs in \textit{parallel} to the control system beginning \cameraready{any computation necessary for decision making}, with the validation result checked before it finishes, allowing the control system to \cameraready{proceed with} any live action. \cameraready{We believe a more optimized repair algorithm implementation may be able to further speed up runtime.}




\subsection{Simulation-Based Evaluation}
\label{sec:eval-simulation}

We test our system’s performance across a wider range of potential operating conditions in simulation. We use production traces of demands, paths, and topology from WAN $A$ with $O(100)$ routers and $O(1000)$ links\footnote{We report uni-directional links, including those for network ingress/egress.}, as well as the open-source demands and topology from the Abilene (12 routers, 54 links) and \geant datasets (22 routers, 116 links)~\cite{sndlib, topohub}.\footnote{We choose these two topologies as they are, to our knowledge, the largest public topologies that release production demand measurements.} 
We assume all-pairs shortest-path routing for Abilene and \geant.

\T{Simulated telemetry.} Once we have the demand, topology, and path information, we can use the path invariants~--~\ie derive per-link loads by tracing the demands along their paths~--~to calculate the counter values for all interfaces in the network. However, these counter values are idealized since they ignore the noise we observe in practice. We thus \textit{add noise} to these idealized counter values. The amount of noise we add is carefully selected to match the real-world link-invariant, node-invariant and path-invariant noise distributions we reported in \cref{fig:inv}. We refer the reader to \cref{sec:gen} for the precise methodology we follow. The resultant counter values are our baseline for a \textit{healthy} network that has noise but no telemetry \cameraready{or demand} bugs.

\T{Dataset sizes.} Our snapshots of WAN $A$ are taken every 15 minutes over a four-week period in Spring 2025, totaling 2,000 snapshots, and our \geant and Abilene datasets comprise the first 4,000 snapshots of each. 

\T{Modeling buggy demands/telemetry.} We model bugs by applying a range of random perturbations. When fuzzing demand that is taken as input to TE, we first pick a random number between 5\% and 45\% of demand entries to perturb, and then pick a range at random from {5\%-15\%, 15\%-25\%, 25\%-35\%, 35\%-45\%} from which to sample the amount of traffic to change in each affected demand entry. We stop at 45\% as larger changes are easier to detect. At each entry, demand is either always removed to model bugs that omit demand, or removed and added with equal probabilities to model bugs causing stale demands. To plot, we then compute the total percent of absolute demand changed in the snapshot. For telemetry, we use the same method, and specify the percent of counters and the perturbation range used. 

\vspace{-0.5em}
\subsubsection{Results}
\vspace{-0.5em}

\begin{figure}
    \centering
    \begin{subfigure}[t]{0.48\columnwidth}
        \centering
        \includegraphics[width=\linewidth]{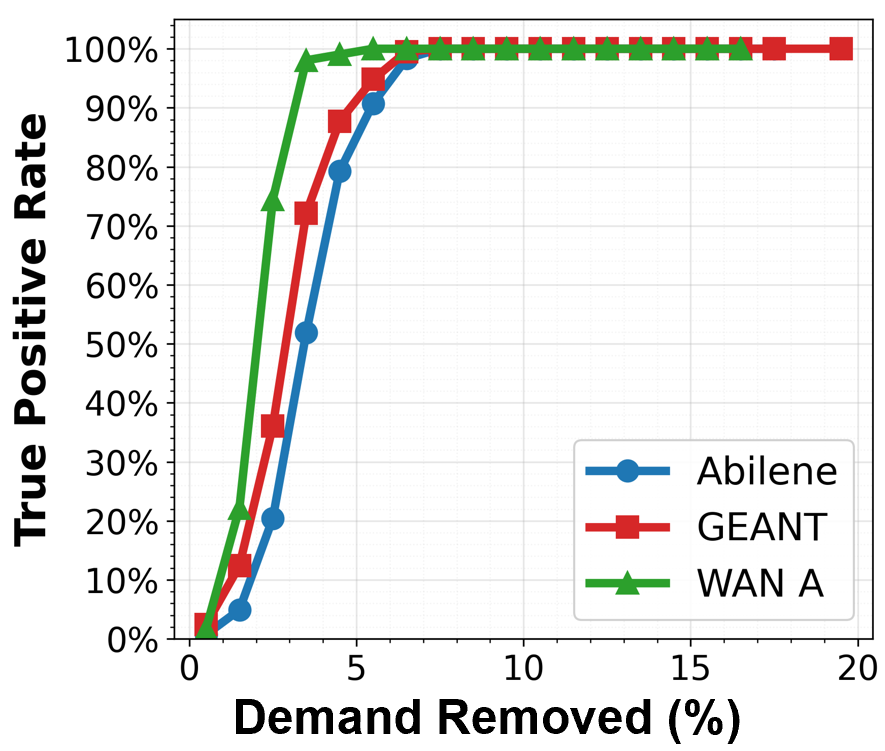}
        \caption{Demand removals.}
        \label{fig:eval-demand-sweep}
    \end{subfigure}
    \hfill
    \begin{subfigure}[t]{0.48\columnwidth}
        \centering
        \includegraphics[width=\linewidth]{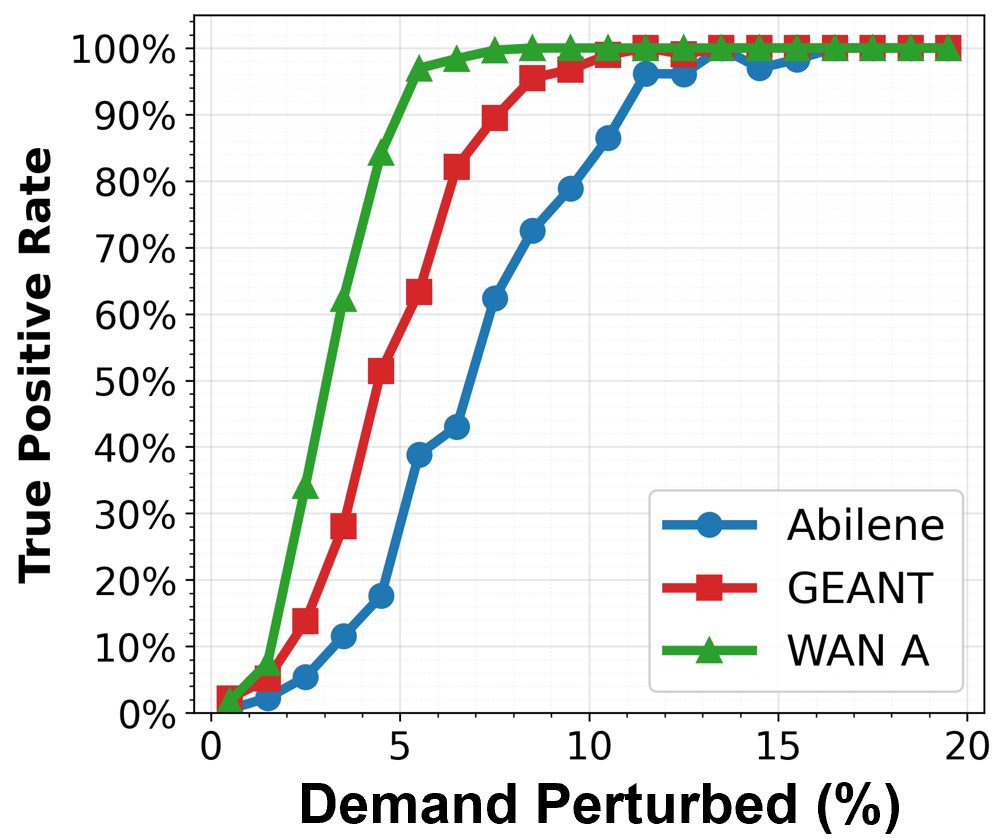}
        \caption{Demand removals and additions.}
        \label{fig:demand-random-sweep}
    \end{subfigure}
    \caption{\sysname's TPR with buggy demands. The x-axis displays the sum of the absolute values of the demand changes as a percentage of the total demand. 
    }
    \vspace{-5pt}
    \label{fig:demand-sweeps}
\end{figure}

\T{Is \sysname effective at catching perturbations to demand?} We sweep a range of demand perturbations and plot the resulting TPR as a function of the absolute change in demand. 
\cref{fig:demand-sweeps} depicts the results.
\Cref{fig:eval-demand-sweep} shows our results when modeling bugs that only removes demands. We see that for such errors, \cameraready{\textit{\sysname is very effective at detecting demand perturbations.}} It can detect 74\% of the 2-3\% perturbations, and 100\% of the 5\%+ perturbations. 

\Cref{fig:demand-random-sweep} shows our results when modeling bugs with stale demands (\ie in which demand elements are scaled up or down with equal probability). The outages we have observed only involved demands scaled in one direction, but we include stale ones as a harder test case.
We note that this is a particularly challenging scenario, as we are effectively shifting demand between flows, keeping the total demand constant on average. The results are only slightly worse for WAN $A$ compared to only removing demand  and therefore \sys does more than just checking the total demand. We see that very small networks (Abilene) are affected more greatly, as there is less path diversity, so additions to one demand cancel out subtractions from another when sharing the same links. Nonetheless, TPR remains high for larger changes in demand volumes, \eg approaching 90\% when 10\% of demands are perturbed.

As detailed in \cref{sec:design-validating-demand}, these strong TPR results are due to the fact that demand perturbations are reflected in many invariant checks (at each interface that demand traverses). Thus, a few demand changes can push enough validation invariant checks outside of the equality threshold, causing \sysname to detect the perturbed demand. We observe empirically that sensitivity increases with network size, as expected from \cref{thm:exp}. 

\begin{figure}
    \centering
    \begin{subfigure}[t]{0.48\columnwidth}
        \centering
        \includegraphics[width=\linewidth]{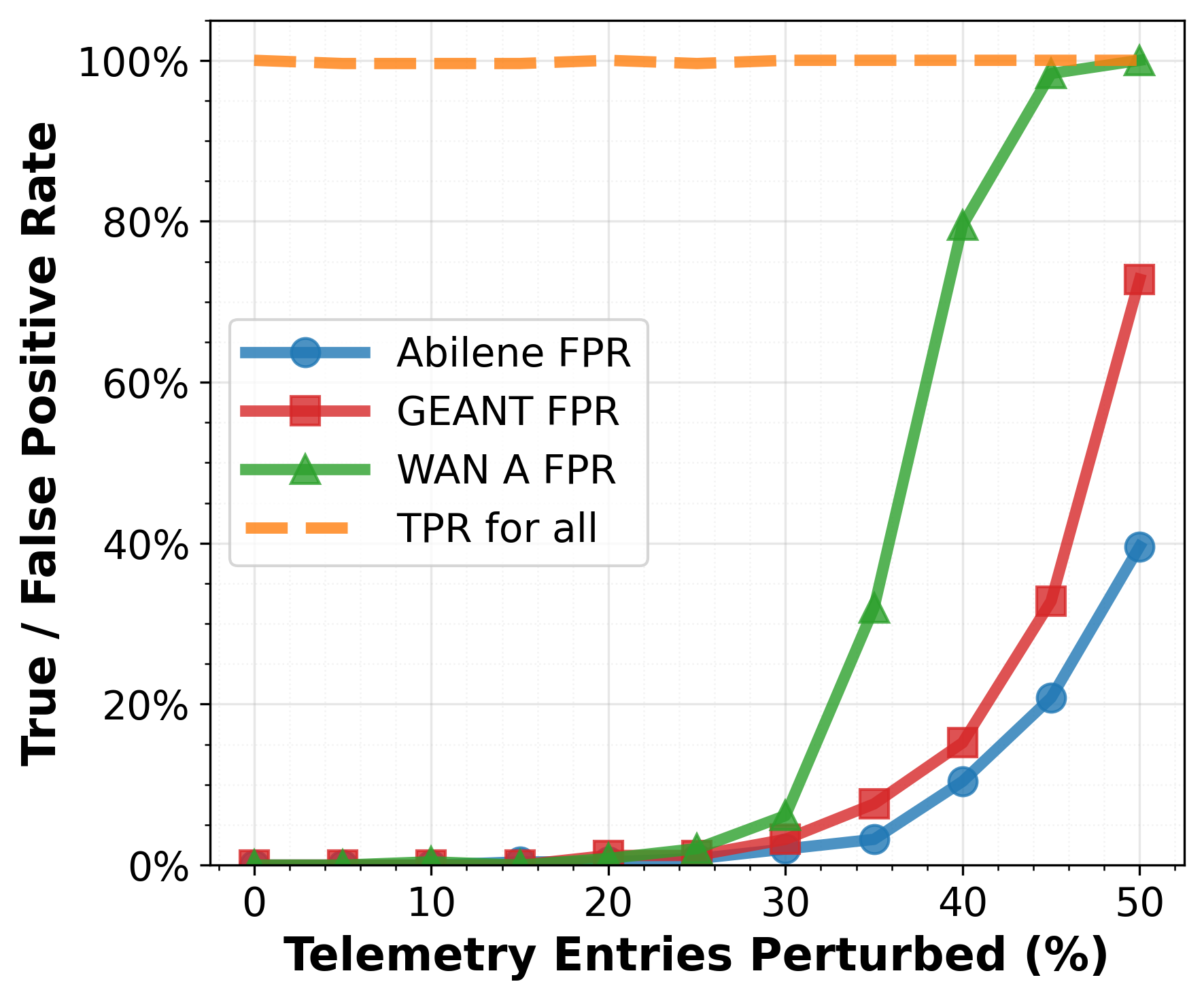}
        \caption{Random counter zeroing.}
        \label{fig:eval-telem-sweep}
    \end{subfigure}
   \hfill
    \begin{subfigure}[t]{0.48\columnwidth}
        \centering
        \includegraphics[width=\linewidth]{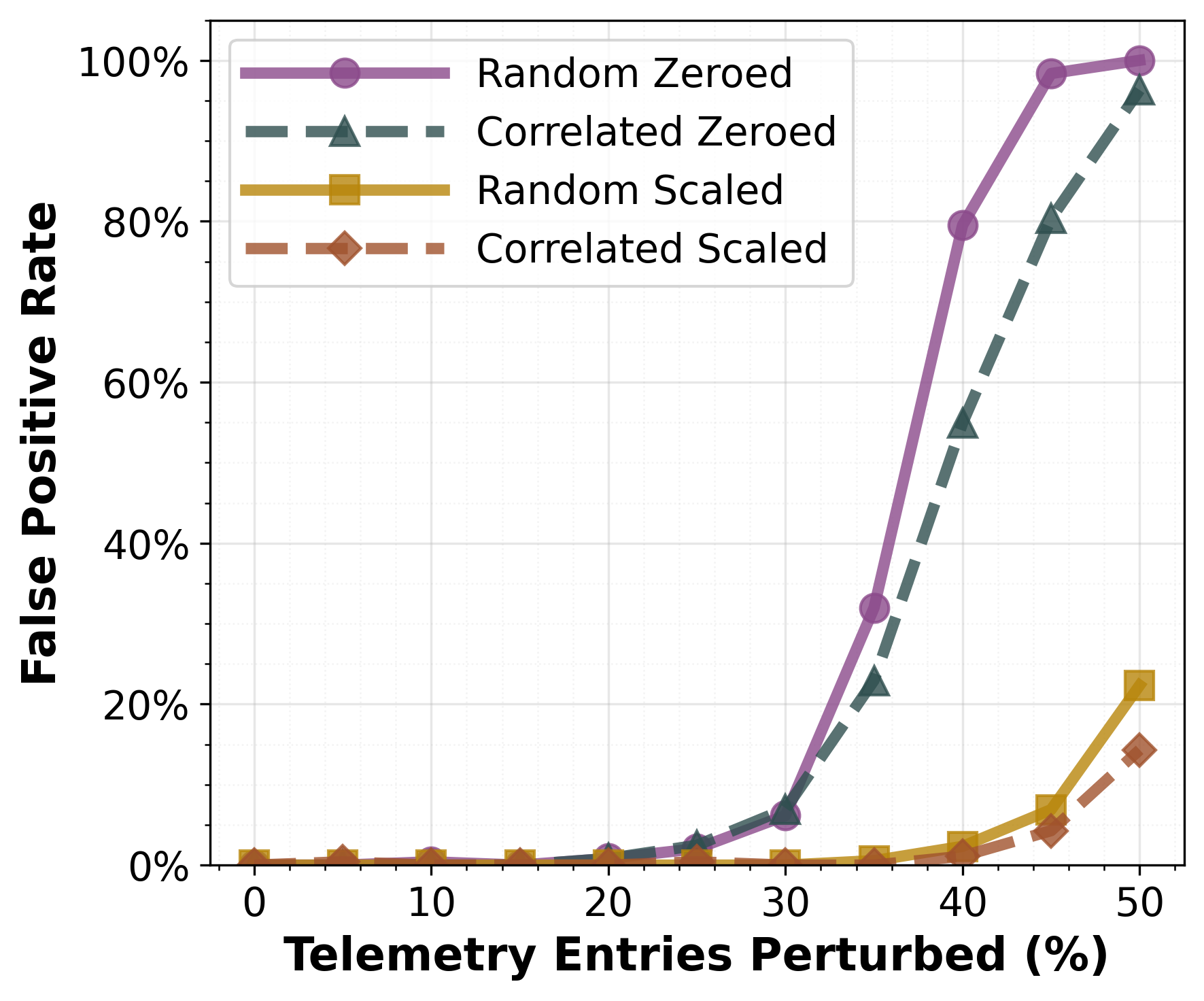}
        \caption{Four telemetry perturbations applied to WAN $A$.}
        \label{fig:eval-correlation-sweep}
    \end{subfigure}
    \caption{\sysname's FPR with buggy  telemetry.}
    \label{fig:topo-sweeps}
\end{figure}

\begin{figure}
    \centering
    \includegraphics[width=0.5\columnwidth]{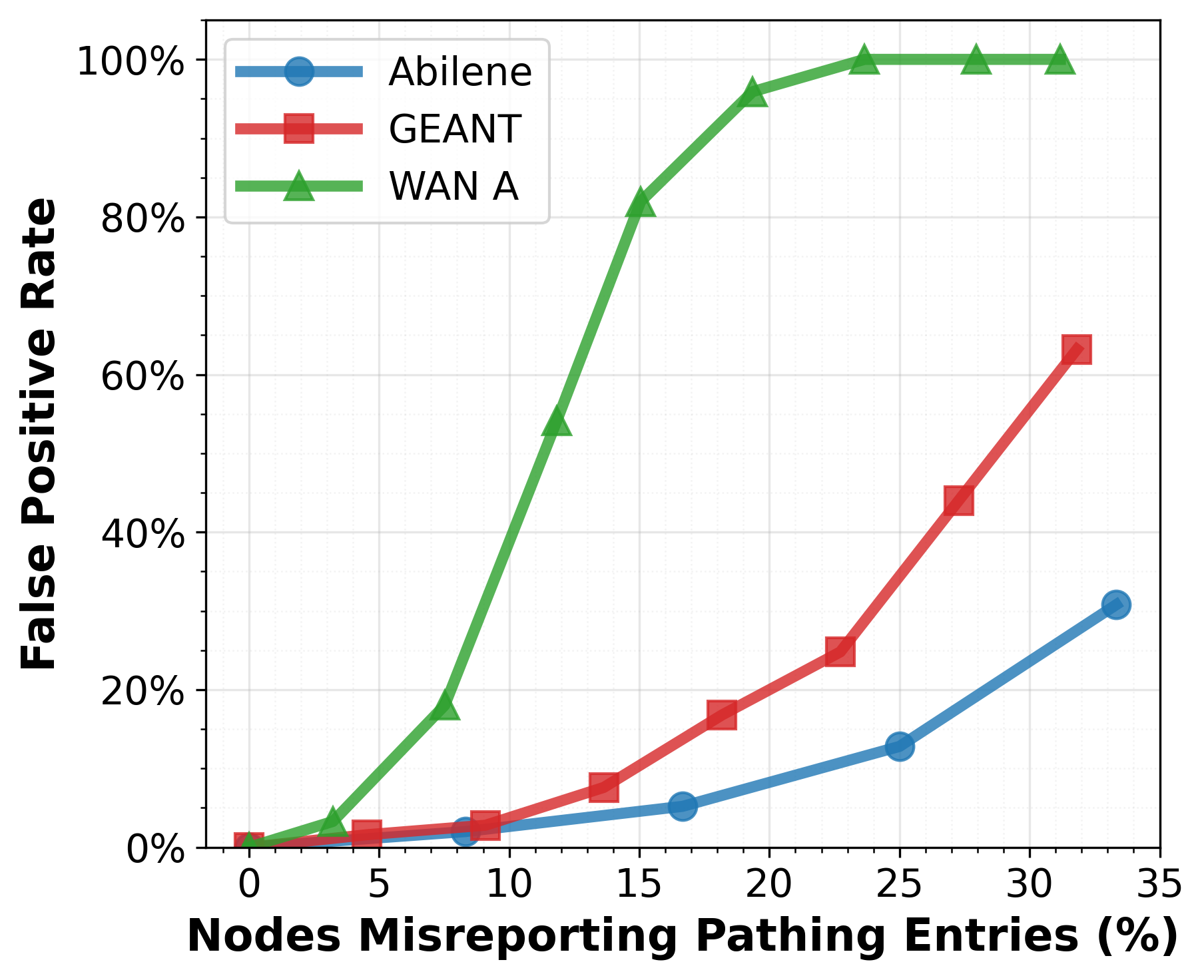}
    \caption{\sysname's FPR with buggy path information.}
    \label{fig:path-sweeps}
\end{figure}

\T{Is \sysname resilient to buggy counter telemetry?} Buggy telemetry can cause validation invariants to fail even though the demand is correct, leading to false positives. \sysname's repair process is designed to catch and rectify telemetry bugs, enabling demand validation to proceed safely. \cref{fig:eval-telem-sweep} shows the performance as we perturb telemetry by zeroing an increasing percentage of counter values, simulating dropped or missing telemetry, which is the most common form of telemetry corruption. We choose zeroing as such errors are harder to repair than random scaling, because if both sides of a link are zeroed, their agreement is difficult to overcome to recover the original value.

Our results show that \cameraready{\textit{\sysname is very resilient to buggy counter telemetry}}, able to withstand up to 30\% of telemetry counters being zeroed before experiencing false positives. We observe the (pleasing) trend that with larger topologies, \cameraready{\sysname becomes increasingly resilient} to telemetry perturbations. We also show in \cref{fig:eval-telem-sweep} that \sysname maintains 100\% TPR in the face of telemetry perturbations in \textit{all} cases (orange line, assuming 10\% of demand volume randomly removed). This is because the set of buggy demand inputs that make up the TPR already causes enough invariants to fail to be detected; perturbing telemetry randomly only causes \textit{more} invariants to fail.

\T{Do correlated failures challenge \sysname's effectiveness?} Telemetry perturbations often occur due to router-level bugs, and thus may be correlated, meaning they affect all of the interfaces of a particular router at once. 
\cref{fig:eval-correlation-sweep} shows four classes of telemetry failures, comparing correlated and fully random zeroing and scaling (by 25\%-75\%). We see that \cameraready{\textit{\sysname's repair process fully recovers from random and correlated failures}} with up to 25\% of the telemetry having errors, while the FPR starts increasing for higher percentages. 

Note that correlated failures do not seem to significantly increase FPR. This is because they do not increase the probability that both sides of a link agree about an erroneous link value, \eg with zeroing, or that a single side is wrong. Moreover, the repair algorithm was designed to poll other sources of information for the node-invariant vote, and thus to address correlated failures.

\T{Is \sysname resilient to buggy paths?}
The other telemetry \sysname relies on is forwarding entries that define paths.
Though very rare, a router can possibly fail to correctly report some or all of its forwarding entries due to either a hardware or software fault on the router. We evaluate a particularly pessimistic node failure mode where each affected router reports not having \textit{any} forwarding entries, sweeping the percent of routers in \cref{fig:path-sweeps}. Our results show that the FPR \cameraready{stays at zero up until} we go above 4\% of nodes experiencing faults. Such failures typically only affect one router, hence \cameraready{in practice} are \cameraready{typically well} below the \cameraready{point at which \sysname begins to experience false positives}. Further, such bugs are easily detected, and in such cases the best strategy would be to skip validation.

\subsection{Factor analysis}
\label{sec:eval-factor-analysis}

We now analyze whether \sys's repair step~--~and each of its components~--~contributes to ensuring its low FPR. 

\begin{figure*}
    \centering
    \includegraphics[width=1\textwidth]{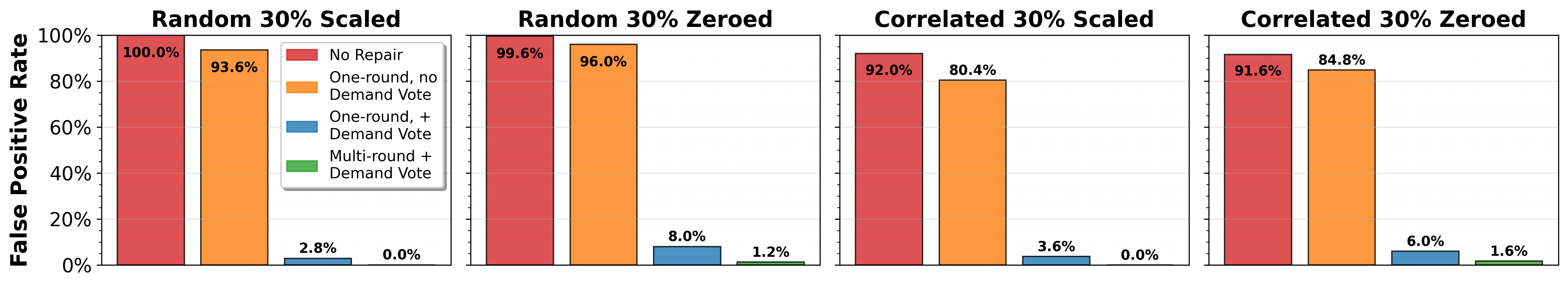}
    \caption{Factor analysis of the impact of our repair algorithm's design choices on \sysname's FPR in \geant.}
    \label{fig:repair_factor_analysis}
\end{figure*}

\T{Impact of repair on demand validation.} We want to evaluate the impact of the different steps of the repair algorithm on \sys's demand validation. For simplicity, we show results for just the \geant network in this section. We start by assuming a router bug affects 30\% of the counters (denoted \textit{random}) or all counters in 30\% of the routers (denoted \textit{correlated}), and they are either zeroed or scaled down by a random factor chosen uniformly at random in the range $\sbrac{25\%,75\%}$. 

We test scenarios designed to highlight the contribution of certain key design choices: (i)~granting a vote to our demand-induced estimate, $l_{demand}$, (ii)~using multiple rounds of voting, (iii)~the repair step in its entirety.  
\Cref{fig:repair_factor_analysis} illustrates  the resulting FPR.
We see that running the \sys validation without repair yields poor FPR results of over 90\% in all cases. 
If we run a single round of the repair algorithm without the vote from $l_{demand}$, FPR only goes down slightly. However, if we run a single round with all five votes, FPR goes down significantly. Thus, the additional demand-based tie-breaking estimate brings the most significant contribution to FPR. FPR is slightly lower for scaling bugs because they are easier to repair than zeroed counters. With scaling bugs, if two routers that share a link are affected, the two router estimates are different and therefore the error is easier to repair. If both are zeroed, then they agree on the same value and it is harder to make them abandon this value. 
Finally, the multiple voting rounds eliminate most of the remaining FPR by bringing global information in cases where router counters are stuck in local minima. All cases bring down the FPR to under 2\%. Thus, the overall impact on FPR is dramatic, since the network-wide view magnifies it.

\begin{figure}
    \centering
    \includegraphics[width=0.7\linewidth]{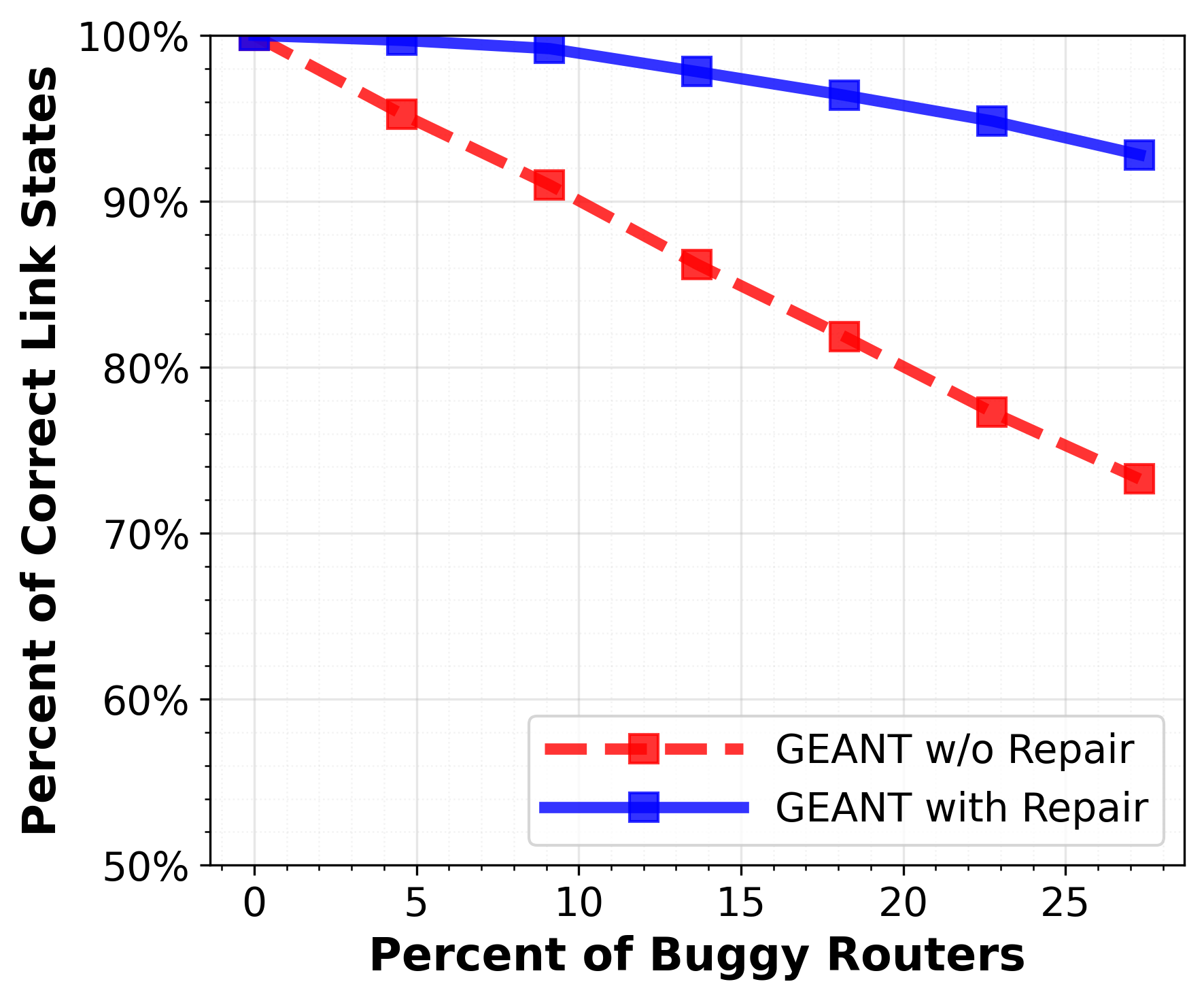}
    \caption{Effectiveness of topology repair in \geant.}
    \label{fig:topology_repair}
\end{figure}

\T{Impact of \sys's repair on topology validation.}  We now want to study the effectiveness of \sys repair for \textit{topology validation} (\cref{sec:design-topology}).
We consider the \geant WAN, and assume a worst case scenario that for any buggy router, \textit{all} telemetry (physical status, link layer status, \textit{and} counters) for all interfaces are buggy, reporting status down and counter 0, even though the links are actually functioning. We run our repair algorithm and plot the percentage of links we can correctly identify to be up, both before and after repair as we increase the number of buggy routers. \Cref{{fig:topology_repair}} shows that repair manages to solve some 2/3 of the incorrect link states, even when over 1/4 of the routers are buggy. 

\medskip

\textbf{In summary}, our evaluation of \sys across both a production deployment and simulation demonstrates that our system is both highly effective at detecting even small (5\%) demand perturbations, and resilient to telemetry causing false positives. Our shadow deployment shows that our design is scalable to production data rates, stable enough to not produce false positives due to real-world noise, yet sensitive enough to catch real-world bugs. 

\vspace{-0.5em}
\section{Related Work}
\label{sec:related-work}
\vspace{-0.5em}

\T{Prior Workshop Publication.}
This paper extends our earlier workshop publication~\cite{hotnets}, which introduced the motivation for input validation in SDN systems and outlined a high-level sketch of a potential solution. In contrast, the design, repair and validation algorithms presented in this paper are entirely new. 
Our prototype implementation and evaluation are also new and were not part of the workshop paper.

\T{Formal verification.} \cameraready{The research community has looked extensively to formal verification as a tool to catch or prevent network issues~\cite{netverify.fun, netkat, bagpipe, tiramisu, minesweeper}. Some works, such as Batfish~\cite{batfish} and HSA~\cite{hsa}, have enjoyed real-world adoption and commercial success. Adjacent approaches include using network emulation directly~\cite{crystalnet, dons}, integrating verification and emulation~\cite{model-free-verif}, or general network testing techniques~\cite{netcastle}. All of these methods aim to enforce some properties on the output behavior given some inputs, without considering the correctness of those inputs. \sysname is complementary to these efforts as it checks at runtime that the inputs correctly reflect reality, which we show is a significant category of outage.}

\T{Output validation.} \cameraready{Some existing tooling considers validating network behavior by comparing actual to expected behavior.} For instance, the Jingubang~\cite{li2024reasoning} module of Hoyan~\cite{hoyan} simulates the expected traffic load of the non-SDN Alibaba WAN, then compares with monitored load. 
However, almost all such approaches aim to prove that an output is correct given a particular input and do not focus on validating the input itself. 

\T{Reasoning language.} 
Flexible Contracts for Resiliency~\cite{sievers2014flexiblecontractsresiliency} provides a language for reasoning about chains of assumptions that connect ground-truth signals to higher-level abstractions, similar to the approach we took in reasoning about the relationship between network signals and controller input. Adopting their formal methods into our system is an interesting direction for future work.

\T{Outage statistics.} The ``Evolve or Die'' paper~\cite{evolve-or-die} extracts lessons from 100+ outages at Google. Although they do include examples of outages due to incorrect inputs, they do not report incorrect inputs as their leading cause of outages.
We speculate that this difference may be due to a few reasons; \eg their analysis looked at both their SDN- and protocol-based WANs, is nearly a decade old, and some of their recommendations may now be common practice leading to new dominant causes. 
Likewise, SkyNet~\cite{skynet} recently reported on the leading network failure root causes at the non-SDN Alibaba WAN.

\T{Anomaly detection.} Anomaly detection~\cite{ahmed2016anomalydetect, patcha2007anomalyoverview} can detect outliers in input data through statistical analysis of a signal's history. In contrast, we focus on whether a signal reflects the ground truth, and so look across signals for corroboration.

\T{Statistical tools.} \cameraready{In the validation step, we essentially want to check whether the distribution of the invariant imbalance is different from a typical distribution given clean input demands, and especially whether the imbalance values are higher. There are many two-sample statistical tests for checking if one distribution with non-negative values is stochastically larger than another, \eg the one-sided Kolmogorov-Smirnov test or the Anderson-Darling test~\cite{kvam2022nonparametric}. Our validation scheme that focuses on the imbalance distribution tail is designed to reduce the sensitivity to bugs in the router counters. Early evaluations indicate that it is competitive with other tests, but more evaluations are needed. }

\vspace{-1em}
\section{Conclusion}
\label{sec:conclusion}

We presented \sysname, a system that validates the inputs to the WAN SDN controller and alerts operators when these inputs appear incorrect. Running \sysname as a shadow validation system in a large WAN, we found that it caught an invalid input on live production data, while maintaining a 0\% false positive rate on healthy inputs, thus avoiding false alarms to the operators. Finally, we showed through simulation that \sysname can catch up to 100\% of demand perturbations over 5\%, while being resilient to noisy telemetry with up to 30\% of telemetry perturbations.

\cameraready{While our focus in this paper is on validating SDN inputs for TE, the methods we present can apply just as well to non-SDN TE such as RSVP-TE, where similar invariants to ours can be checked at each router on the flooded global network state. We further believe this class of input validation problem generalizes to a far wider range of control systems; while the particular invariants we selected for this paper were specific to our problem (\eg relating sums of demand values to interface counters), the methods used to find them were principled and can be applied more broadly to other classes of systems to validate their input, both for network control systems beyond TE such as link health monitoring, and more generally for control systems beyond networks such as building climate control or power management systems.
}

\section*{Acknowledgments}
\cameraready{We thank the anonymous reviewers and our shepherd Matthew Caesar for their insightful comments and helpful feedback. We appreciate early feedback from our colleagues at UC Berkeley, including Scott Shenker, Sarah McClure, Emily Marx, and others. We also thank our colleagues at Google, including Bikash Koley, Brent Stephens, Aniket Pednekar, Anurag Sharma, Hank Levy, David Culler, and others for their discussions and contributions to the development of CrossCheck. We particularly thank Sorin Constantinescu for his deep practical network telemetry knowledge and help in procuring live datasets. This work was partly supported by the Louis and Miriam Benjamin Chair in Computer-Communication Networks.}

\bibliographystyle{plain}
\bibliography{bibliography/nsdi}

\clearpage
\appendix


\section{Link-Invariant Imbalance at WAN $B$}
\label{sec:WAN_B}

\begin{figure*}
  \centering
  \begin{subfigure}[t]{0.35\textwidth}
    \centering
      \hfill
    \includegraphics[width=\textwidth]{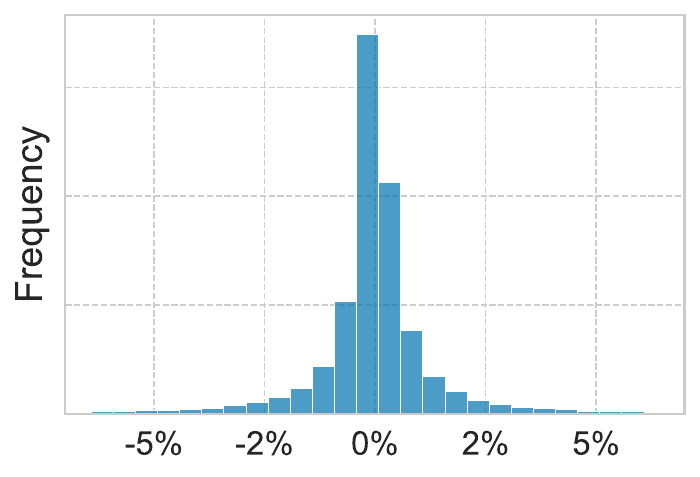}
    \caption{Link invariant over 30 second window.}
    \label{fig:invB:a}
  \end{subfigure}
  \hspace{2cm}
  \begin{subfigure}[t]{0.35\textwidth}
    \centering
    \includegraphics[width=\textwidth]{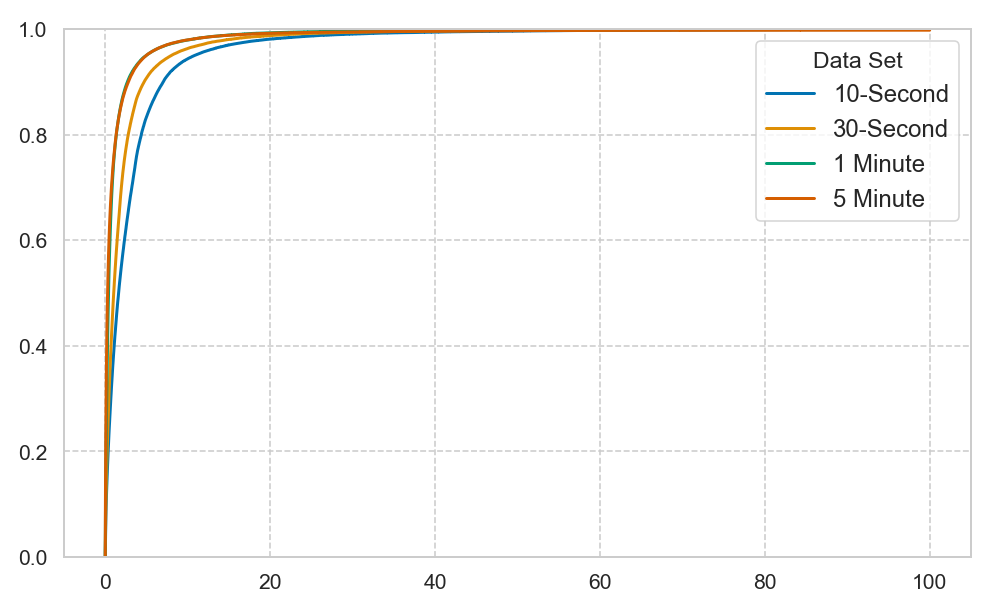}
    \caption{Impact of collection time}
    \label{fig:invB:b}
  \end{subfigure}  
    \hfill
  \caption{Measured imbalance for the link invariants at a large production WAN $B$. 
  }
  \label{fig:invB}
\end{figure*}

\Cref{fig:invB} presents the measured link-invariant imbalance PDF at WAN $B$. First, 
\cref{fig:invB:a} shows how most imbalances hold within 1\%. 
Next, \cref{fig:invB:b} illustrates the impact of collection times on the CDF of the imbalance. By averaging over longer periods, the imbalance decreases, trading off against a longer time before raising a potential alarm. The figure shows how averaging over 1 minute and 5 minutes yield very similar results. 

\eat{
    \T{Mean imbalance in WAN $A$.} We note that in the path-invariant for WAN $A$, we observed a persistent mean imbalance of -2.2\% between demand-path predicted counters and actual counter values. This is due to two factors. First, some router vendors include packet headers in their byte counters and some do not, while demand is instrumented without taking headers into account. Second, there exists some control plane signaling originating at the routers that is not always included in the demand.
    We adjust for this bias in our validation system by expecting a mean imbalance of -2.2\%, \ie automatically substracting this number from the measured values so the average imbalance is zero. 
}
    

\section{Repair Guarantees: Proof of \cref{thm:dtp}}
\label{sec:proof}


\bp 
The theorem is about corrupted counters at a single link. We will distinguish two types of link: \\
\textit{(i)}~\textit{internal links}, which are between two routers of the network, and where two counters can be affected; and \\
\textit{(ii)}~\textit{border links}, which are between a network router and outside the network. Only the counter of the router can be affected.

First, we need to prove that the corrupted counters will not affect other links in the iterative repair computation. For instance, in \cref{fig:DTP-fig},
we want to show that a corruption of the egress and ingress counters at link $X \to Y$ will not affect the link $A \to X$. Assume that the counters at link $j$ are corrupted. If a neighboring link $j'$ is internal, it has 5 estimators, and $j$ only impacts one of these, so the repair algorithm will reach the correct value for $j'$ with confidence of at least $1-\frac{1}{5}=0.8$. Likewise, if $j'$ is a border link, then at least two estimators out of three are unaffected and the algorithm reaches the correct value with confidence of at least $\frac{2}{3}$. Finally, if all neighboring links are unaffected, then a non-neighboring link also cannot be affected, even after several rounds of computation.

Second, we need to prove that the repair algorithm reaches the correct decision for link $j$. We distinguish two cases: \\
\textit{(i)~Internal link:}  Both the demand-based estimator and the router-based estimators are unaffected by the corruption, since the demand vector and the neighboring link counters provide correct values. Thus, the algorithm will correctly converge with a confidence of at least $\frac{3}{5}=0.6$.\\
\textit{(ii)~Border link:} Again, the demand-based and router-based estimators are unaffected, and therefore it will reach the correct decision with a confidence of at least $\frac{2}{3}$. 
\ep

\section{Scaling Model: Proof of \thm{thm:exp}}
\label{sec:proof_exp}

\begin{proof}
Let $p$ (respectively $p'$) denote the probability that the invariant imbalance falls within the threshold of $\tau$ under healthy (resp. buggy) inputs, so $p>p'$. Also let $\cutoff$ denote a fixed validation cutoff, set such that $p>\cutoff>p'$. Then the validation algorithm checks whether after flipping $n$ times a coin with prob. $p$ (resp. $p'$) of getting heads, the proportion of heads is at least $\cutoff$. This is exactly $B_{n,p}(n\cdot \cutoff)$ (resp. $B_{n,p'}(n\cdot \cutoff)$), using the CDF $B_{n,p}$ (resp. $B_{n,p'}$) of the corresponding Binomial distribution.

As a result, $FPR$ and $1-TPR$ have a Chernoff–Hoeffding upper bound that decreases exponentially with $n$~\cite{mulzer2018five,rioul2025information}, which proves the theorem. Namely, 
\begin{equation}
    FPR \leq e^{-n \cdot D(\cutoff \parallel p)}    
\end{equation}
and 
\begin{equation}
    1-TPR \leq e^{-n \cdot D(\cutoff \parallel p')},
\end{equation}
where 
\begin{equation}
D(x \parallel y) = x \cdot \ln\para{\frac{x}{y}}+(1-x)\cdot \ln\para{\frac{1-x}{1-y}}
\end{equation} is the Kullback–Leibler divergence between Bernoulli random variables with parameters $x$ and $y$.
\end{proof}

\section{Repair Pseudocode}
\label{s:dtp-pseudocode}

\SetAlgoNoEnd
\begin{algorithm*}
\SetAlgoLined
\KwIn{Latest measured telemetry values for each interface, stored in \texttt{telemetry\_dict}, \\
\hspace{1.2em} e.g., \texttt{telemetry\_dict}=\{\texttt{r1.eth12}: \{\texttt{send\_rate}: 100, \texttt{recv\_rate}: 75, \texttt{status}: UP\}, \ldots\} \\
Static network layout information, stored in \texttt{network}, \\
\hspace{1.2em} e.g., \texttt{network}=\{\texttt{routers}: \texttt{[r1, r2, $\ldots$}], \texttt{links}: \texttt{[(r1.eth12, r2.eth5), (r1.eth11, r3.eth4), $\ldots$]}\}}

\texttt{locked} $\leftarrow$ \{\} \tcp*{Final results, locked values}

\While{size(\texttt{locked}) $<$ num\_counters}{

    \texttt{possible\_values} $\leftarrow$ \{\}\;
    
    \ForEach{link\_id $\in$ network.links}{ 
        \If{link\_id $\in$ \texttt{locked}}{
            \texttt{possible\_values}[link\_id] $\leftarrow$ [\texttt{locked}[link\_id].value]\;
            \textbf{continue}\;
        }

        src\_intf, dst\_intf = link\_id[0], link\_id[1]\;

        \If{src\_intf is internal}{
            src\_measurement $\leftarrow$ telemetry\_dict[src\_intf][\texttt{send\_rate}]\;
            \texttt{possible\_values}[link\_id].append(src\_measurement)\;
        }

        \If{dst\_intf is internal}{
            dst\_measurement $\leftarrow$ telemetry\_dict[dst\_intf][\texttt{recv\_rate}]\;
            \texttt{possible\_values}[link\_id].append(dst\_measurement)\;
        }
        predicted\_val $\leftarrow$ get\_predicted\_traffic(paths, demand, link\_id)\;
        \texttt{possible\_values}[link\_id].append(predicted\_val)\;
    }

    \texttt{votes} $\leftarrow$ \{\} \tcp*{(src\_intf, dst\_intf): [(v1, weight1), ...]}

    \tcp{Populate votes from router beliefs.}
    \ForEach{router $\in$ network.routers} {
        local\_links $\leftarrow$ get\_links\_attached\_to(router) \tcp*{Any links with src or dst intf at this router}
        
        rtr\_predicted\_vals $\leftarrow$ \{\} \;

        \For{$i \leftarrow 1$ \KwTo $N$}{
            assignment $\leftarrow$ \{\}\;
            \ForEach{link\_id $\in$ local\_links}{
                assignment[link\_id] $\leftarrow$ random.choice(\texttt{possible\_values}[link\_id]) \;
            }
            \ForEach{link\_id $\in$ local\_links}{
                rtr\_predicted\_vals[link\_id].append(($\sum$ assignments) - assignment[link\_id])\;
            }
        }
        \ForEach{link\_id $\in$ local\_links}{
            val, conf $\leftarrow$ most\_frequent\_value\_with\_confidence(rtr\_predicted\_vals[link\_id]) \;
            \texttt{votes}[link\_id].append((val, conf))\;
        } 
    }

    \tcp{Populate votes from possible\_values} 
    \ForEach{link\_id, vals\_list $\in$ \texttt{possible\_values}} {
        \ForEach{val $\in$ vals\_list} {
            \texttt{votes}[link\_id].append((val, 1.0))\;
        }
    }

    assignment\_scores $\leftarrow$ \{\} \tcp*{Lock the value with the highest majority confidence}
    \ForEach{link\_id, votes\_list $\in$ \texttt{votes}}{
        \If{src\_intf $\in$ locked}{
            \textbf{continue}\;
        }
        cumulative\_scores $\leftarrow$ group and sum votes\_list by value with $\leq$ 0.03\% threshold\;
        best\_val, score $\leftarrow$ max(cumulative\_scores)\;
        assignment\_scores[link\_id] $\leftarrow$ (best\_val, score)\;
    }

    link\_id, (best\_val, score) $\leftarrow$ max(assignment\_scores)\;
    \texttt{locked}[link\_id] $\leftarrow$ (best\_val, score)\;
}

\Return \texttt{locked}

\caption{Repair algorithm for telemetry correction.}\label{alg:pseudocde}
\end{algorithm*}

Algorithm~\ref{alg:pseudocde} presents the pseudo-code for the repair algorithm. 

\section{Generating Baseline Link Counters That Match Invariant Noise}\label{sec:gen}

We generate production-realistic telemetry counters for simulation by trying to match the link-invariant, node-invariant and path-invariant noise distributions we observe in the WAN $A$ production data (\cref{fig:inv}). To do this, we first compute the expected amount of traffic on each link with the given demands and paths. 
Then, we perturb the link counters to account for the path-invariant noise. To do so, we add to each link some random i.i.d. noise that is distributed like the path-invariant noise distribution, and assign the noisy link value to its two counters. 
Next, we further modify the interface counters on either side of the link to match the three invariant noise distributions. This is done by first drawing some noise random variable $x$ from the link-invariant noise distribution, then adding $x/2$ to one of the two counters at the end of the link, while removing $x/2$ from the other. Thus, their difference is $x$, following the link invariant noise, and their average has not changed, thus still obeying the path invariant noise.
Finally, the router invariant noise is still unsatisfied. We similarly update counters at each router to obey it, while making sure that the distributions of the other two invariants still hold, until we converge to a satisfying result.
Thus our baseline healthy bug-free telemetry will contain noise matching observed real-world data. 

\section{Additional Evaluations}
\label{sec:more_eval}

\begin{figure}
\centering
\includegraphics[width=0.85\linewidth]{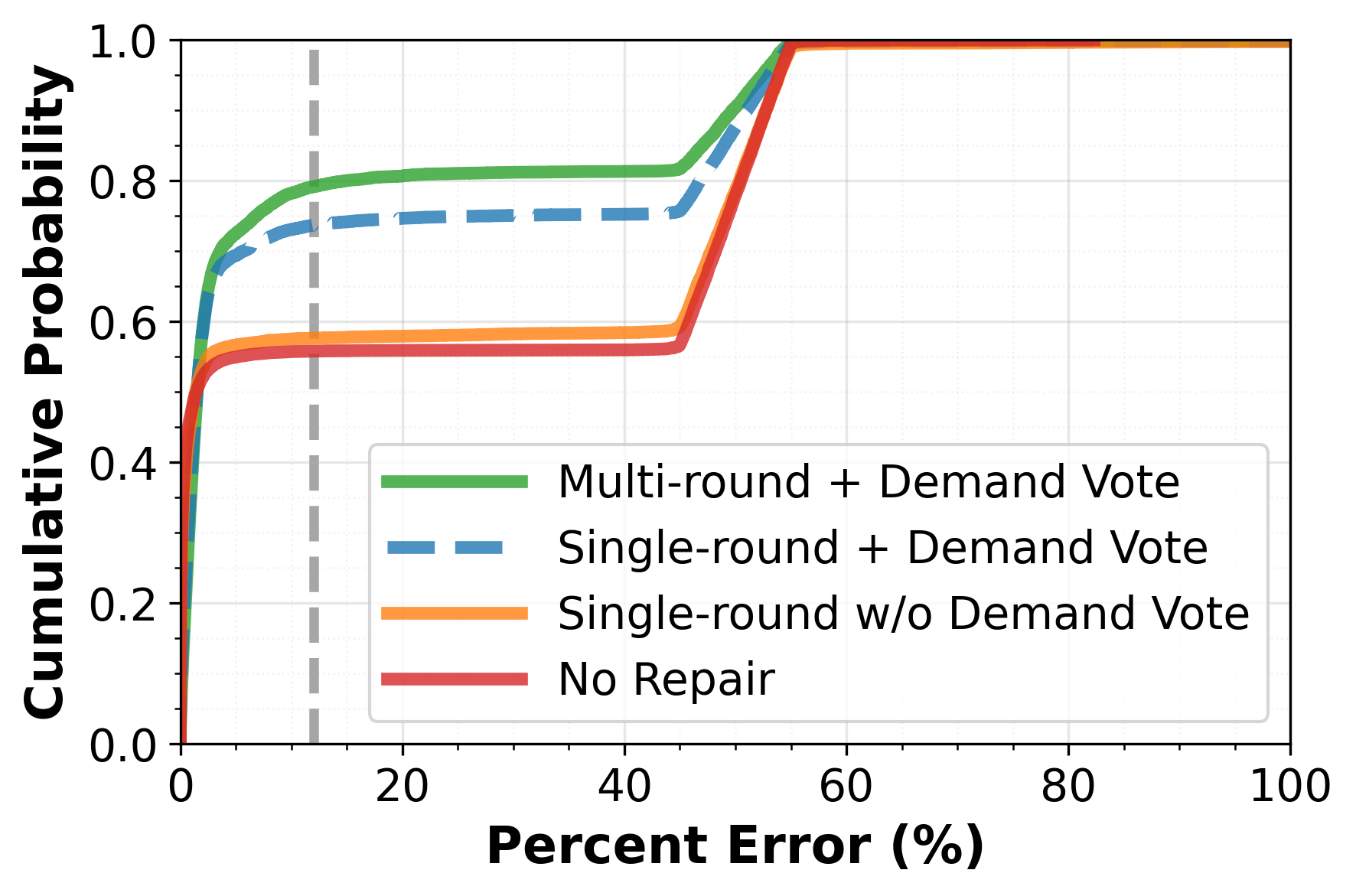}
\caption{Impact of the repair components on the CDF of the counter error in \geant.}
\label{fig:repair_components}
\end{figure}

\T{Do all repair components help reduce counter errors?} 
In the \geant network, we assume that a router bug affects 45\% of the counters, and they are scaled down by a random factor chosen uniformly at random in the range $\sbrac{45\%,55\%}$. \Cref{fig:repair_components} illustrates the contribution of the various components of the repair algorithm on the CDF of the counter error -- \cameraready{we see for the no repair baseline (red solid line), $55\%$ of counters have only noise, and the remaining 45\% have errors between 45\% and 55\%. If we run a single round of the repair algorithm without the fifth vote of the demand-induced link estimate, we only correct an additional 3-4\% of counters. With a single round that includes all five votes, it goes up to 75\% of counters with error under 10\%. Thus, this fifth tie-breaking vote brings the most significant contribution. Finally, if we use the full repair with all gossip rounds, it increases to over 80\% of counters having less than 10\% error, \ie the \sys repair corrected about 2/3 of the router bug-induced errors (from 45\% to 15\%).}

\begin{figure*}
  \centering
  \begin{subfigure}[t]{0.24\textwidth}
    \centering
    \includegraphics[width=\textwidth]{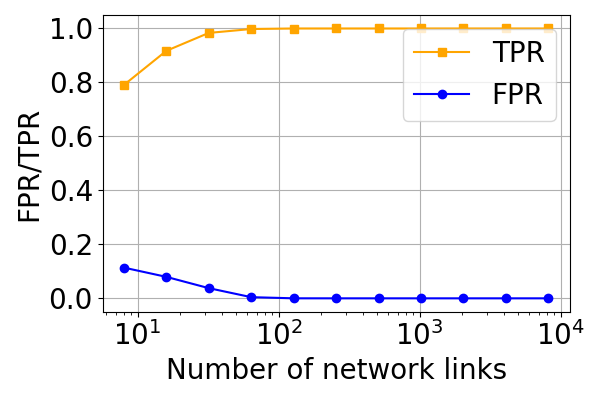}
    \caption{Fixed cutoff}
    \label{fig:n:a}
  \end{subfigure}
  \hfill
    \begin{subfigure}[t]{0.24\textwidth}
    \centering
    \includegraphics[width=\textwidth]{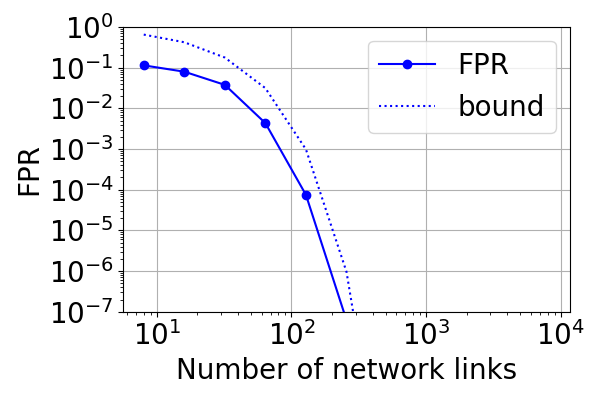}
    \caption{Zoom on $FPR$}
    \label{fig:n:FPR}
  \end{subfigure}
  \hfill
    \begin{subfigure}[t]{0.24\textwidth}
    \centering
    \includegraphics[width=\textwidth]{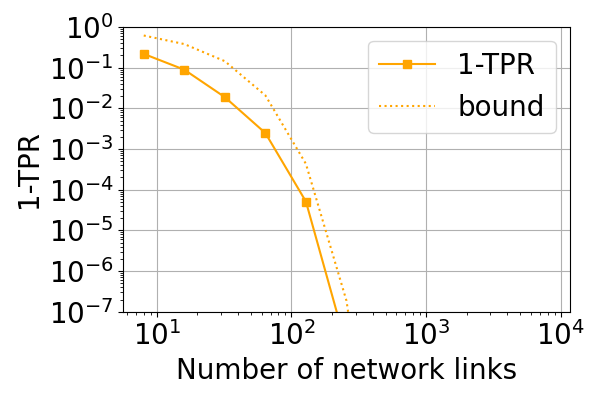}
    \caption{Zoom on $1-TPR$}
    \label{fig:n:TPR}
  \end{subfigure}
  \hfill
  \begin{subfigure}[t]{0.24\textwidth}
    \centering
    \includegraphics[width=\textwidth]{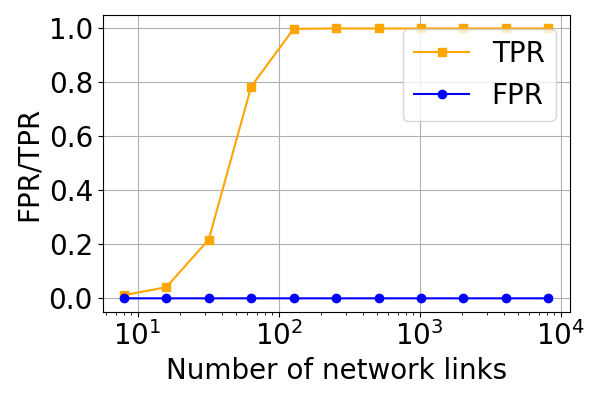}
    \caption{Variable cutoff}
    \label{fig:n:var}
  \end{subfigure}  
  \caption{FPR/TPR scaling model as a function of the number of network links. (a)~Fixed validation cutoff, with exponentially decreasing (b)~$FPR$ and (c)~$1-TPR$. (d)~Tuning the cutoff for each network size to attain a near-zero FPR. 
  }
  \label{fig:n}
\end{figure*}

\T{Does \sysname improve exponentially in larger networks?} \Cref{fig:n} plots the FPR and TPR under the scaling model of \cref{thm:exp}, where the path invariant imbalance distributions are assumed to be the same for all links in all networks.
Specifically, for healthy inputs, the model assumes the path invariant imbalance distribution that was measured in production WAN $A$. 
For buggy inputs, it adds a Gaussian-distributed imbalance to this distribution using $\mathcal{N}(5,5)$. In \cref{fig:n:a}, given a fixed validation cutoff ($\cutoff=0.6$), both TPR and FPR quickly converge to 1 for large networks, but are not optimal for small ones. \Cref{fig:n:FPR,fig:n:TPR} provide a close-up view of $FPR$ and $1-TPR$, illustrating how they decrease exponentially fast with the number of links together with their upper bound, as expected from \cref{thm:exp} and its proof (note that the x axis also uses a log scale). 

\Cref{fig:n:var} sets a different cutoff for each network size, so that $FPR\leq 10^{-6}$, corresponding to at most one false alarm every ten years assuming a validation every five minutes. 
Since there is a TPR vs. FPR tradeoff, TPR becomes worse, and suffers more for small networks. Given that our considered WANs have at least 54 uni-directional links (for Abilene), and modern real-world WANs are much larger, this indicates that \sys should be quite efficient.

\section{Theoretical Aspects of Input Validation}
\label{sec:counter}

\begin{figure}
\centering
\includegraphics[page=1,clip, trim=0cm 3cm 0cm 1cm, width=0.9\linewidth]{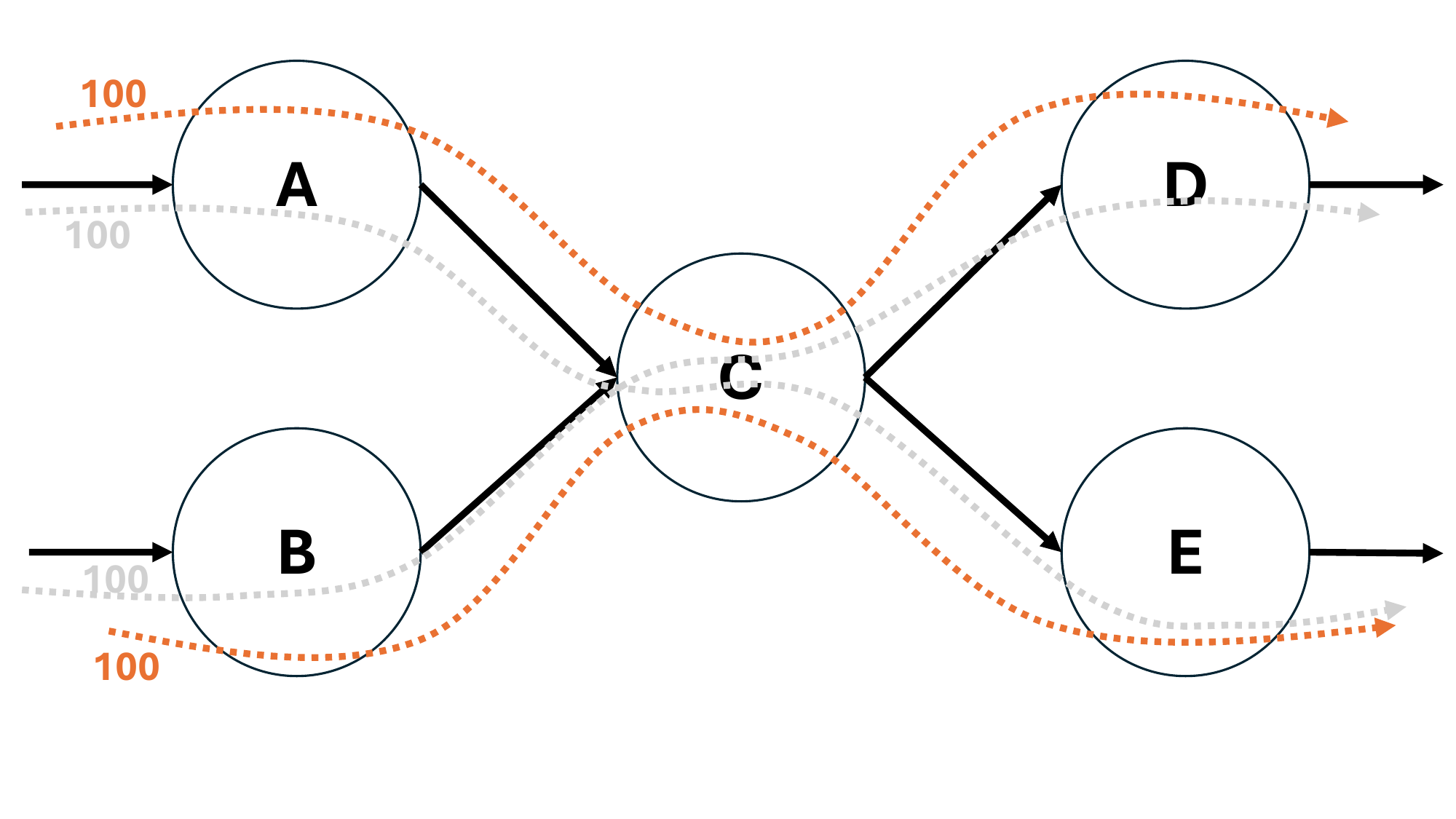}
\caption{Detection counter-example.}
\label{fig:counter}
\end{figure}

\T{Guessing inputs.} The path invariant relates input demands to router counters.
It is natural to ask whether we could guess input 
demands 
given low-level network telemetry. Compressed sensing (CS) and low-complexity iterative message-passing algorithms such as belief propagation (BP) are efficient at inferring the values of a set of random variables, given a set of constraints that induce relations between them using a sparse graph structure, \ie such that each constraint depends on a small number of variables on average~\cite{andrea2008estimating,donoho2009message,firooz2010network}. For instance, Counter Braids~\cite{lu2008counter,lu2009robust} iteratively provides upper and lower bounds on the estimated random variables. These bounds converge to a solution given a sufficient number of constraints.  Furthermore, matrix completion approaches~\cite{keshavan2010matrix,keshavan2010matrix2,candes2012exact} attempt to recover a matrix from a sampling of its entries.
However, our case differs meaningfully. First, the invariants do not suffice to reconstruct the demand matrix, as shown in the counter-example below. Second, we do not know what entries in our corrupted input have been corrupted, and we do not know how they have been corrupted, a model not dealt with in these frameworks. In our case, the bounds provided by the Counter Braids are too wide and miss an overwhelming majority of the data corruption in most corruption scenarios.

\T{Counter-example to guessing demands.} The path invariant asserts that after taking into account the routing, the total demand for a link should equal the router link counters (\cref{eq:path}). 
It is natural to ask whether we could simply reverse-engineer, and guess the input demands from the low-level network telemetry. This would be a simple way to check the input demands.

\fig{fig:counter} provides a counter-example. The demand includes two (source, destination) flows: $(A,D)$ and $(B,E)$. Each flow is of size 100. All of the link counter values will also be 100. However, if a bug reports the demand as consisting of the two pairs $(A,E)$ and $(B,D)$, nothing would change in the link counter values. Thus, both the healthy and buggy demands would yield the same telemetry signals. This illustrates why we cannot unequivocally establish the demand from the lower-level telemetry. 

\end{document}